\documentclass[runningheads]{llncs}
\usepackage{tikz}
\usepackage{listings}
\usepackage{xcolor}
\usepackage{xspace}
\usepackage{multirow}
\usepackage[textsize=tiny]{todonotes}
\usepackage{amsmath}
\usepackage{stmaryrd}
\usepackage{url}
\usepackage{paralist}
\usepackage[hidelinks]{hyperref}
\usepackage[capitalize]{cleveref}
\usepackage{colortbl}
\usepackage{fp}
\usepackage{enumitem}
\usepackage{placeins}
\usepackage{float}
\usepackage{stfloats}
\usepackage[frozencache,cachedir=minted-cache]{minted}
\usepackage{amssymb}
\usepackage{pgfplots}
\usepackage{xfp}
\usepackage[numbers,sort&compress]{natbib}
\usepackage{booktabs}
\usepackage{paralist}
\usepackage{lineno}
\usepackage[hidelinks]{hyperref}            
\usepackage[firstpage]{draftwatermark}      
\usepackage{orcidlink}

\definecolor{codehighlight}{RGB}{220, 255, 220}
\setminted{breaklines, frame=none, framesep=3mm,  
           mathescape=true, escapeinside=||, highlightcolor=codehighlight} 
\usemintedstyle{pastie}
\definecolor{operator}{RGB}{0, 102, 187}
\newcommand{\op}[1]{\textcolor{operator}{#1}}

\crefname{figure}{Figure}{Figures}
\crefname{table}{Table}{Tables}
\crefname{example}{Example}{Examples}
\crefname{section}{Section}{Sections}
\crefname{definition}{Definition}{Definitions}

\usetikzlibrary{positioning, arrows.meta, matrix, calc}

\definecolor{codegray}{rgb}{0.5,0.5,0.5}
\definecolor{backcolour}{rgb}{0.95,0.95,0.92}

\lstdefinestyle{mystyle}{
    numberstyle=\tiny\color{codegray},
    basicstyle=\ttfamily\footnotesize,
    breakatwhitespace=false,
    keepspaces=true,
    showspaces=false,
    showstringspaces=false,
    numbers=left,
    numbersep=5pt,
    lineskip=0pt,
    aboveskip=0pt,
    belowskip=0pt,
    escapeinside={<@}{@>}
}

\lstset{style=mystyle}

\tikzset{   every node/.style={minimum height=5mm, text height=1.5ex, text depth=.25ex},
            level distance=12mm}

\newcommand{\ignore}[1]{}
\newcommand{\Digests}{{\mathcal{A}}}

\newcommand{\D}{\mathcal{D}}

\newcommand{\semT}[1]{\llbracket #1 \rrbracket}

\newcommand{\aSem}[1]{\left\llbracket #1 \right\rrbracket^\sharp}

\newcommand{\aSemDigest}[1]{\left\llbracket #1 \right\rrbracket^\sharp_{\Digests}}

\newcommand{\digestAnnotation}[1]{ {\textcolor{gray}{\ensuremath{#1} } }}

\newcommand{\raceUnknown}[1]{\mathcal{R}_{#1}}

\newcommand{\act}{\textsf{act}}
\newcommand{\create}{\textsf{create}}
\newcommand{\newNoArgs}{\textsf{new}}

\newcommand{\lock}{\textsf{lock}}
\newcommand{\unlock}{\textsf{unlock}}

\newcommand{\abs}[1]{\lvert#1\rvert}

\newcommand{\Globals}{\mathcal{G}}

\newcommand{\Nodes}{\mathcal{N} }
\newcommand{\Edges}{\mathcal{E} }
\newcommand{\Actions}{\mathcal{A}ct}
\newcommand{\Traces}{{\mathcal{T}}}

\newcommand{\TIDs}{\textsf{TID}}

\newcommand{\main}{\texttt{main}}

\newcommand{\Mutexes}{\mathcal{S}}

\newcommand{\unique}{\textsf{unique}}
\newcommand{\mayrun}{\textsf{may\_run}}
\newcommand{\tid}{\textsf{tid}}
\newcommand{\Bool}{\textsf{Bool}}
\newcommand{\maintid}{\textsf{main}}
\newcommand{\othertid}{\textsf{other}}
\newcommand{\false}{\textsf{false}}
\newcommand{\true}{\textsf{true}}

\newcommand{\semDigests}[1]{\aSem{#1}_\Digests}

\newcommand{\newDigests}{\new^\sharp_\Digests}

\newcommand{\initDigests}{\init^\sharp_\Digests}

\newcommand{\new}{\textsf{new}}
\newcommand{\initMutex}{\textsf{init}}

\newcommand{\init}{\textsf{init}}
\newcommand{\MT}{\ensuremath{\textsf{MT}}}

\newcommand{\STmain}{\ensuremath{\textsf{ST}_\textsf{main}}}
\newcommand{\MTmain}{\ensuremath{\textsf{MT}_\textsf{main}}}

\newcommand{\Goblint}{\textsc{Goblint}\xspace}
\newcommand{\SVCOMP}{\textsc{Sv-Comp}\xspace}

\newtheorem{proofsketch}{Proof Sketch}

\newcommand{\ltDrawingDefs}{
    \tikzset{
        every node/.style={node distance=40pt and 30pt},
        programpoint/.style={circle,draw,font=\small},
        programpointsink/.style={programpoint,line width=0.5mm},
        programpointwide/.style={programpoint,node distance=40pt and 55pt},
        programpointnarrow/.style={programpoint,node distance=40pt and 15pt},
        programpointwidesink/.style={programpointwide,line width=0.5mm},
        edgelabel/.style={midway,font=\small,above=0.5mm}
    }
    \newcommand{\programo}[3]{
        \draw[->](##1)--(##3)node[edgelabel]{$##2$};
    }
    \newcommand{\programon}[4][]{
        \node[programpoint##1,right= of ##2](##4){};
        \draw[->](##2)--(##4)node[edgelabel]{$##3$};
    }
    \newcommand{\programond}[4][]{
        \node[programpoint##1,right= of ##2,dotted](##4){};
        \draw[->,dotted](##2)--(##4)node[edgelabel]{$##3$};
    }
    \newcommand{\createo}[3][]{
        \draw[-latex,blue](##2)--(##3)node[edgelabel,##1]{$\to_c$};
    }
    \newcommand{\joino}[3][]{
        \draw[-latex,brown](##2)--(##3)node[edgelabel,##1]{$\to_j$};
    }
    \newcommand{\mutexo}[5][]{
        \draw[-latex,red,##5](##2) to node[edgelabel,##1]{$\to_{##4}$} (##3);
    }
    \newcommand{\signalo}[5][]{
        \draw[-latex,OliveGreen,##5](##2) to node[edgelabel,##1]{$\to_{##4}$} (##3);
    }
}

\begin{document}


\title{Data Race Detection by\\ Digest-Driven Abstract Interpretation}
\titlerunning{Data Race Detection by Digest-Driven Abstract Interpretation}

\newcommand*{\extended}{}%

\ifdefined\extended
\subtitle{(Extended Version)}
\fi

\author{Michael Schwarz\inst{1}\orcidlink{0000-0002-9828-0308} \and Julian Erhard\inst{2,3}\orcidlink{0000-0002-1729-3925}}
\authorrunning{M. Schwarz and J. Erhard}

\institute{
     National University of Singapore, Singapore\\
     \email{m.schwarz@nus.edu.sg} \and
     Technische Universit\"at M\"unchen, Garching, Germany\and
     Ludwig-Maximilians-Universit\"at M\"unchen, Munich, Germany\\
     \email{julian.erhard@tum.de}
}

\maketitle              
\SetWatermarkAngle{0}
\SetWatermarkText{\raisebox{14.5cm}{%
  \hspace{10.3cm}%
  \href{https://doi.org/10.5281/zenodo.17128591}{\includegraphics[width=20mm,keepaspectratio]{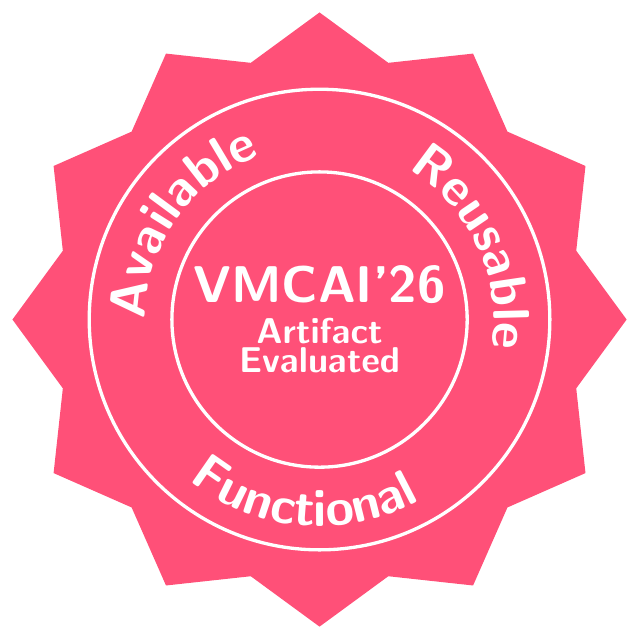}}%
}}


\begin{abstract}
Sound static analysis can prove the absence of data races by establishing that no two conflicting memory accesses can occur at the same time.
We repurpose the concept of \emph{digests}---summaries of computational histories originally introduced to
bring tunable concurrency-sensitivity to thread-modular value analysis by abstract interpretation,
extending this idea to race detection: We use digests to capture the conditions under which conflicting
accesses may not happen in parallel.
To formalize this, we give a definition of data races in the thread-modular local trace semantics and
show how exclusion criteria for potential conflicts can be expressed as digests.
We report on our implementation of digest-driven data race detection in the static analyzer
\Goblint, and evaluate it on the \SVCOMP benchmark suite.
Combining the lockset digest with digests reasoning on thread \emph{id}s and thread joins increases
the number of correctly solved tasks by more than a factor of five compared to lockset reasoning alone.
\keywords{data races, concurrency, static program analysis, software verification, abstract interpretation}
\end{abstract}

\section{Introduction}
Data races 
happen when the same data is accessed by two different threads,
the accesses are unordered, and at least one access is a write. As data races and any bugs they may trigger
do not manifest deterministically,
finding them by testing is exceedingly difficult. For this reason, many
dynamic~\cite{OCallahanC03,RamanZSVY12,SavageBNSA97,Effinger-DeanLCGB12,MarinoMN09,SerebryanyI09} and
static~\cite{PratikakisFH06,Terauchi08, HeSF22, DietschHKSP23, EnglerA03,LiuLLTS021} analyses have been proposed,
with some of the latter~\cite{VojdaniARSVV16,RacerD,dacik2025racerflightweightstaticdata,ChopraPD19,kastner2017finding,mine2016taking} building
on abstract interpretation~\cite{CousotC77}.

Showing data race freedom amounts to proving at least one of two properties for each pair of accesses
where at least one is a write. Either
\begin{itemize}
    \item[(a)] a separation in space (accesses go to different data); or
    \item[(b)] a separation in time (accesses are not concurrent)
\end{itemize}
is to be established.
Here, we propose a novel framework for property $(b)$ and a principled way of designing new abstractions of it.
\begin{figure}[t]
    \centering
    \input{figures/program-mutexes-running-example.tex}
    \caption{Multi-threaded program with thread prototypes \textsf{main} and $t1$.}
    \label{l:running-example}
\end{figure}
\begin{example}\label{ex:running-example}
    Consider the program in \cref{l:prog1} with thread prototypes \textsf{main} and $t1$. 
    It uses the mutex $a$ for synchronization.
    The mutex is first initialized by \textsf{main}.
    Later, \textsf{main} creates a thread from prototype $t1$.
    There are two write-accesses to the global variable $g$ in \textsf{main}, and one in $t1$.
    While the access to $g$ in line \ref{line:firstAccessInMain} is unprotected, i.e., $a$ is not held, it occurs before $t1$ is created.
    Thus, the program is still single-threaded at that time and the access does not race with any other access.
    The accesses in lines \ref{line:secondAccessInMain} and \ref{line:accessInT1} do not race as both are protected by $a$.
    $\hfill\qed$
\end{example}
\cref{ex:running-example} illustrates that one has to rely on different forms of reasoning
to establish race freedom.
Sometimes, the program is still in single-threaded mode for an access, sometimes the set of mutexes overlaps.
In other cases, two writes from the same thread prototype can be shown to not race because only
one unique instance of that thread prototype is created.
Here, we introduce a framework that captures all these different types of arguments.
Conceptually,
\begin{compactitem}
\item[(a)] we associate with each access an abstraction of the execution history leading up to the access
from some set $\Digests$ of \emph{digests}. Locksets, thread \emph{id}s, and whether the program is single-threaded all are (rather coarse) history abstractions. 
\item[(b)] we introduce, for each global $g$, a predicate $(||^{?}_g): \Digests \to \Digests \to \{\textsf{false}, \top \}$ which checks whether
two accesses with respective digests may happen in parallel. While $\textsf{false}$ indicates that the accesses may definitely not happen in parallel,
$\top$ indicates that either such accesses may happen in parallel, or that the analyzer cannot exclude that they may based on digests alone.
Whenever the definition does not depend on $g$, we write $||^{?}$, omitting the index.
\end{compactitem}
\begin{example}\label{ex:lockset-intro}
    For a digest tracking locksets, one can, irrespective of the global, set
    \[
        A_a\;||^{?}\;A_b = \begin{cases}
            \textsf{false} & \text{ if } A_a \cap A_b \neq \emptyset\\
            \top & \text{otherwise}
        \end{cases}
    \]
    to capture that accesses with overlapping locksets do not happen in parallel.
    $\hfill\qed$
\end{example}
As a semantic foundation for our framework, we build on the \emph{local trace semantics}~\cite{schwarz2023clustered,ErhardBHKSSSTV2025,DBLP:conf/sas/SchwarzSSAEV21,hammerandnail}
which is a concrete semantics which, for each thread, tracks only \emph{relevant} history information.
While commonly some sort of interleaving semantics~\cite{Lamport1979} is used to
characterize multi-threaded programs, the local trace semantics as an equivalent, but thread-modular concrete semantics,
is better-suited to our approach, as digests also take a thread-modular view of the execution history.

Previous work~\cite{schwarz2023clustered} introduced the notion of \emph{digests} as a way to track relevant history information in a
thread-modular way to make value analyses by abstract interpretation more precise.
In this way, one obtains \emph{families} of abstract interpretations parametrized
by digests which offer a configurable degree of what was later dubbed \emph{concurrency-sensitivity}~\cite{hammerandnail}.
However, neither of these works addresses data races.
We take on this gap and repurpose digests for data race detection.
The existing digest-driven abstract interpretation framework provides a notion of \emph{admissibility} for instances of
digests from which the soundness of analyses refined by such an instance directly follows.
Here, we enhance this framework with a predicate $||^?_g$ as discussed above and
show how a first definition of such a predicate can automatically be derived from an admissible digest.

Our proposed framework does not only drastically simplify reasoning about race freedom
by decoupling the base analysis from the admissibility of the digests, but also
provides the conceptual foundation for our existing implementation of data race detection
in the successful static analyzer \Goblint.

The rest of the paper is structured as follows:
\cref{s:lt} recalls the local trace semantics and its abstract interpretation, whereas \cref{s:digests}
recalls digests and introduces the notion of \emph{access stability}.
\cref{s:lt-traces} proposes two definitions of data races,
one closely aligned with intuition
and one 
more directly amenable to abstraction,
and establishes their equivalence.
\cref{s:races-digest} formally introduces the predicate $||^?_g$ and its soundness requirements, before
showing how a definition can automatically be derived and how bespoke definitions can sometimes be more precise.
\cref{s:algorithm} presents our digest-driven race detection algorithm,
whereas \cref{s:experiments} reports on experiments within the static analyzer \Goblint.
\cref{s:related} outlines related work, with \cref{s:conclusion} describing future work and concluding.

\section{Local Trace Semantics and Its Abstract Interpretation}\label{s:lt}
Thread-modular abstract interpretation scales as it avoids considering all possible interleavings.
Traditional concrete semantics, on the other hand, distinguish all interleavings.
This disconnect renders soundness proofs of thread-modular analyses cumbersome.
The local trace semantics~\cite{schwarz2023clustered,ErhardBHKSSSTV2025,DBLP:conf/sas/SchwarzSSAEV21,hammerandnail}, which has been shown equivalent
to the interleaving semantics w.r.t. safety properties~\cite{ErhardBHKSSSTV2025}, embraces thread-modularity already at the level of the concrete semantics:
\begin{compactitem}
    \item
    Instead of considering interleavings and hence total orders, the local trace semantics considers a \emph{partial order} of actions.
    Other than actions within one thread, only relevant actions  are ordered, namely those where information flows between threads
    as is, e.g., the case for locking and unlocking mutexes.
    \item A local trace corresponds to the \emph{local perspective} of one thread.
    It only contains those parts of the computational history that the thread can have knowledge about:
    This includes its own past as well as those parts of the past of other threads that it has learned about by communicating.
    Such communication happens via pairs of \emph{observable} and \emph{observing actions}, such as unlocking and locking some specific mutex.
\end{compactitem}
At an observing action (e.g., locking a mutex), the observed local trace (e.g., one ending in an unlock of the same mutex)
is incorporated into the local trace of the observing thread and the result prolonged to record the observing action
having taken place.
At this step, only \emph{compatible} local traces can be incorporated.
Abstract interpretations of the local trace semantics track values for program points and for observable actions
and thus share the rough structure with the concrete semantics~\cite{DBLP:conf/sas/SchwarzSSAEV21,schwarz2023clustered}
allowing for scalable analyses that enjoy modular soundness proofs.
Abstractions of the aforementioned compatibility of local traces are a key ingredient for making thread-modular abstract interpretation and, as we show in this paper, data race detection more precise.

\paragraph{Local Traces.}
More concretely, a local trace consists of several swimlanes, one for each thread appearing in
the trace, which records a sequence of thread-local configurations attained by this thread.
Such thread-local configurations record, e.g., the values of \emph{local} variables and the current program point of the thread,
but crucially no information about the global state, such as values of global variables.
The order of configurations within one swimlane is the program order of the corresponding thread.
Additionally, some dependencies between thread-local configurations are recorded.
Such dependencies
may, e.g., be between the action creating a thread\footnote{
    For technical reasons relating to the use of creation histories as concrete thread \emph{id}s in prior work, the dependency is from the last configuration of the creating thread before the
    create action to the first configuration of the created thread.
} and the first configuration of the resulting thread,
or between unlock and lock operations of the same mutex.
We call the union of all such dependencies and the program orders the causality order, and demand that it is
acyclic and has a unique maximal element. This unique maximal configuration belongs to one thread
---the \emph{ego thread}--- whose perspective is taken.
\setminted{
	escapeinside=@@
}
\newcounter{lineno}
\newcommand{\lnum}{\refstepcounter{lineno}\tiny\arabic{lineno}\;\;}
\newcommand{\lnumref}[1]{\ref{#1}}
\begin{figure}[t]
	\begin{minipage}[t]{0.5\textwidth}
		\begin{tabular}[t]{rl|c|c} 
			& & MT-dig. & Lockset-dig. \\
			\lnum & \mintinline{c}{main:} & \digestAnnotation{\STmain} & \digestAnnotation{\emptyset} \\
			\lnum & \mintinline{c}{    @\op{init}@(a);} & \digestAnnotation{\STmain} & \digestAnnotation{\emptyset} \\
			\lnum \label{line:mg:firstAccessSequenceInMain} & \mintinline{c}{    @\op{init}@(@$m_g$@);} & \digestAnnotation{\STmain} & \digestAnnotation{\emptyset} \\
			\lnum & \mintinline{c}{    @\op{lock}@(@$m_g$@);}  & \digestAnnotation{\STmain} & \digestAnnotation{\{m_g\}} \\
			\lnum \label{line:mg:firstAccessInMain} & \mintinline{c}{    g = 0;}  & \digestAnnotation{\STmain} & \digestAnnotation{\{m_g\}} \\
			\lnum & \mintinline{c}{    @\op{unlock}@(@$m_g$@);} & \digestAnnotation{\STmain} & \digestAnnotation{\emptyset} \\
			\lnum & \mintinline{c}{    @\op{create}@(t1);} & \digestAnnotation{\MTmain} & \digestAnnotation{\emptyset} \\
			\lnum \label{line:mg:secondAccessSequenceInMain} & \mintinline{c}{    @\op{lock}@(a);} & \digestAnnotation{\MTmain} & \digestAnnotation{\{a\}} \\
			\lnum & \mintinline{c}{    @\op{lock}@(@$m_g$@);} & \digestAnnotation{\MTmain} & \digestAnnotation{\{a, m_g\}} \\
			\lnum \label{line:mg:secondAccessInMain} & \mintinline{c}{    g = 5;}  & \digestAnnotation{\MTmain} & \digestAnnotation{\{a, m_g\}} \\
			\lnum & \mintinline{c}{    @\op{unlock}@(@$m_g$@);} & \digestAnnotation{\MTmain} & \digestAnnotation{\{a\}} \\
			\lnum & \mintinline{c}{    @\op{unlock}@(a);} & \digestAnnotation{\MTmain} & \digestAnnotation{\emptyset} \\
		\end{tabular}
	\end{minipage}%
	\hspace{1em}%
	\begin{minipage}[t]{0.5\textwidth}
		\begin{tabular}[t]{rl|c|c}
			& & MT-dig. & Lockset-dig. \\
			\lnum & \mintinline{c}{t1:} & \digestAnnotation{\MT} & \digestAnnotation{\emptyset} \\
			\lnum \label{line:mg:accessSequenceInT1} & \mintinline{c}{    @\op{lock}@(a);} & \digestAnnotation{\MT} & \digestAnnotation{\{a\}} \\
			\lnum & \mintinline{c}{    @\op{lock}@(@$m_g$@);} & \digestAnnotation{\MT} & \digestAnnotation{\{a, m_g\}} \\
			\lnum \label{line:mg:accessInT1} & \mintinline{c}{    g = 12;}  & \digestAnnotation{\MT} & \digestAnnotation{\{a, m_g\}} \\
			\lnum & \mintinline{c}{    @\op{unlock}@(@$m_g$@);} & \digestAnnotation{\MT} & \digestAnnotation{\{a\}} \\
			\lnum & \mintinline{c}{    @\op{unlock}@(a);} & \digestAnnotation{\MT} & \digestAnnotation{\emptyset} \\
		\end{tabular}
	\end{minipage}
	\caption{Multi-threaded program from \cref{l:running-example} enhanced with the atomicity mutex $m_g$ required by the local trace semantics.
	All program locations are annotated using the MT-digest (see \cref{f:threadingmainA}) and the lockset-digest (see \cref{f:locksets}).\label{l:prog1}}
\end{figure}
\setminted{
	escapeinside=||
}

\begin{figure*}
    \ltDrawingDefs
    \centering
    \scalebox{0.67}{
        \begin{tikzpicture}
            \node[programpoint](tmpp0){};
            \programon{tmpp0}{\init(a)}{tmpp1}
            \programon{tmpp1}{\init(m_g)}{tmpp11}
            \programon{tmpp11}{\lock(m_g)}{tmpp12}
            \programon{tmpp12}{g = 0}{tmpp13}
            \programon[wide]{tmpp13}{\unlock(m_g)}{tmpp14}
            \programon[wide]{tmpp14}{\create(t1)}{tmpp2}
            \programon{tmpp2}{\lock(a)}{tmpp3}
            \programon{tmpp3}{\lock(m_g)}{tmpp31}
            \programon{tmpp31}{g = 5}{tmpp4}
            \programon[wide]{tmpp4}{\unlock(m_g)}{tmpp5}
            \programon{tmpp5}{\unlock(a)}{tmpp6}

            \node[programpoint,below right=40pt and 42pt of tmpp14](t1pp0){};
            \programon[sink]{t1pp0}{\lock(a)}{t1pp1};

            \mutexo[below=2mm, pos=0.2]{tmpp11}{tmpp12}{m_g}{out=300,in=210} 
            \mutexo[below=0mm, pos=0.6]{tmpp14}{tmpp31}{m_g}{out=320,in=190} 
            \mutexo[left=5mm]{tmpp1}{tmpp3}{a}{out=290,in=190}
            \mutexo[below=1mm, pos=0.2]{tmpp6}{t1pp1}{a}{out=210,in=65}
            \createo[left=2mm, pos=0.8]{tmpp14}{t1pp0}

        \end{tikzpicture}}
    \caption{Stylized local trace of the program in \cref{l:prog1}, where the ego thread is the single instance of $t1$, which completes its first action \lock$(a)$.
    The black arrows indicate the order of configurations within a thread.
    The red arrows indicate an ordering between two configurations that is mediated via a \unlock/\textsf{init} and \lock\ of a certain mutex.
    The blue arrow indicates the order between a configuration of a thread creating another thread and the first configuration of the newly created thread.
    }\label{f:prog1}
\end{figure*}

\begin{example}
    Consider the program in \cref{l:prog1}, for now ignoring any annotations.
    \cref{f:prog1} shows an example local trace for this program, with the last appearing configuration of the ego thread
    highlighted by a thicker border.
    $\hfill\qed$
\end{example}
The program in \cref{l:prog1} is a variant of the one from \cref{l:running-example} where all accesses to the global
variable $g$ are surrounded by an additional mutex $m_g$, its so-called \emph{atomicity mutex}.
In fact, this is how the local trace semantics enforces a sequential-consistency-like property for accesses to global variables.
Let $\Globals$ be the set of global variables.
Then, any access $\chi$ to a global $g\in\Globals$ (we allow accesses of the form \texttt{x = g} and \texttt{g = x}
to copy the value of a global $g$ from/to a local variable $x$),
is immediately preceded by locking the atomicity mutex for $g$ and immediately followed by unlocking it, i.e.,
it occurs as the sequence $\lock(m_g); \chi; \unlock(m_g)$.
This allows conveniently treating accesses to global variables as local actions---with all communication happening via locking and unlocking atomicity mutexes.
We will outline in \cref{s:lt-traces} how this assumption interacts with the goal of detecting data races.

More formally, let $\Nodes$ denote the set of all program points.
We consider disjoint, uniquely labeled, control-flow graphs where each control-flow graph corresponds to the prototype of a thread.
Control-flow graphs consist of nodes corresponding to program points and edges
$(u,\act,v) \in \Edges$ where $u,v \in \Nodes$ and $\act$ is from some set $\Actions$ of actions.
This set of actions is given as the disjoint union of the sets of observing, observable, local, and creating actions.
Each control-flow graph has exactly one start node which has no incoming
edges. We identify the label of a control-flow graph with its start node.
One of the control-flow graphs is labeled \textsf{main}. It represents the main thread with which execution starts.

Let us call the set of all local traces $\Traces$.
Within $\Traces$, there is a subset $\init \subset \Traces$ of local traces consisting only of an initial configuration of the main thread
at its initial program point $u_0$.

An instantiation of the local trace semantics to some set of actions then provides for each edge $(u,\act,v)$ a function
$\semT{(u,\act,v)}$ prolonging a local trace
with the corresponding action, where possible.
For non-observing actions, $\semT{(u,\act,v)}$ has type $\Traces \to 2^{\Traces}$ where the returned set is either a
singleton or empty.\footnote{
    This is purely for technical reasons. Constructs such as non-deterministic assignment remain possible,
    as CFGs can have several outgoing edges per node.
}
This corresponds to extending the local trace with the action $\act$ when possible.
For example, extending a local trace $t$ along an edge originating at some program point $u$ is only possible if $t$
ends in $u$.
For an observing action $\act$, $\semT{(u,\act,v)}$ has type $\Traces \to \Traces \to 2^{\Traces}$,
where the returned set is again either empty or a singleton.
When $\semT{(u,\act,v)}\,t_0\,t_{1}$ returns a non-empty set, the local traces $t_0$ and $t_1$ are said to be \emph{compatible} w.r.t. the edge $(u, \act, v)$ and each other.
In particular, to be compatible, $t_1$ must end in an observable action whose type matches the ones observed by $\act$.
Depending on the type of observing action, additional requirements have to be fulfilled.
For instance, the operations corresponding to locking and unlocking some mutex $a$ need to be totally ordered.
For edges corresponding to the creation of new threads, which are of the form $(u_1,\create\,u',u_2)$,
we additionally require a function
$\new: \Traces \to \Nodes \to \Nodes \to 2^\Traces$;
$\new\,t\,u_1\,u'$ computes from a local trace $t$ ending in $u_1$ the local trace of the newly created thread starting its execution at $u'$, provided such a thread creation is possible at this edge.

\paragraph{Instantiation.}
While we avoid fixing the set of observable, observing, and local actions to remain general,
we will fix subsets to give examples.
We then have, in addition to thread creates as described above, the following actions:
\begin{compactitem}
    \item The observing action $\lock(a)$, and the observable actions $\unlock(a)$ and $\initMutex(a)$,
    for each mutex $a$ from the set of mutexes $\Mutexes$. We require $\{ m_g \mid g \in \Globals \} \subseteq \Mutexes$,
    i.e, atomicity mutexes for all global variables are included in the set of mutexes.
    \item As local actions, we have at least $x=g$ and $g=x$ to copy values between local variables and globals.
    We call either an \emph{access} to $g$.
\end{compactitem}
Consider again the local trace in \cref{f:prog1} which originates from the program in \cref{l:prog1}.
This local trace belongs to thread $t1$ as evidenced by the maximal element being a configuration of that thread.
Of particular interest are the locking orders $\to_a$ and $\to_{m_g}$ for mutexes $a$ and $m_g$,
which illustrate additional conditions that need to hold in local traces:
In the lock order $\to_a$ for each mutex $a$, every lock must be preceded by
exactly one unlock of $a$ or be the first lock of $a$ in which case it is preceded by \textsf{init(a)}.
Each unlock and init operation of $a$ must be directly followed by at most one lock operation.
Some further requirements on these orders are detailed in previous work \cite{schwarzthesis}.

\paragraph{Fixpoint Formulation.}
While the set $\Traces$ of local traces can be characterized as the least fixpoint obtained when, starting from $\init$,
applying the functions $\semT{\cdot}$ and $\new$ to all (pairs of) local traces and taking the union
with $\init$, a different---but equivalent~\cite{schwarzthesis,hammerandnail}---view is more conducive to static analysis:
Here, one considers the least solution of a constraint system with several
constraint system variables, called \emph{unknowns}.
One such unknown is introduced for each program point, and for each observable action.
An unknown $[u]$ associated with a program point $u$ collects all local traces ending in $u$.
For each observable action $\act$ an unknown $[\act]$
collects all local traces ending with this action.
When lifting $\semT{\cdot}$ and \new\ point-wise
to sets, the resulting constraint system takes the following form:
\begin{align}
[u_0] &\supseteq \init \label{eq:cons-init}\\
[u'] &\supseteq \new([u_1],u_1,u'), \text{for } (u_1,\create\,u',u_2) \in\Edges \label{eq:cons-create}\\
[u_2] &\supseteq \semT{(u_1,\act,u_2)}([u_1]), \text{for } (u_1,\act,u_2) \in\Edges, \act \text{ not observing} \label{eq:cons-control-flow} \\
[\act] &\supseteq \semT{(u_1,\act,u_2)}([u_1]), \text{for } (u_1,\act,u_2) \in\Edges, \act \text{ observable} \label{eq:cons-observable-action} \displaybreak[0]\\
[u_2] &\supseteq \semT{(u_1,\act,u_2)}([u_1],[\act']),  \text{for } (u_1,\act,u_2) \in\Edges, \act \text{ observing}, \; \label{eq:cons-observing-action}\\
&  \qquad\qquad\qquad\qquad \qquad\qquad\qquad\qquad \act' \text{ observed by } \act \nonumber
\end{align}
For edges labelled with an observable action, there are two constraints:
Constraints (\ref{eq:cons-control-flow}) propagate
sets of resulting local traces to the unknown corresponding to the control-flow successor, whereas
constraints (\ref{eq:cons-observable-action})
ensure that the resulting local traces are recorded at the unknown corresponding to the observable action.

\ignore{The least solution of this constraint system thus describes the set of local traces in a way immediately amenable to an
abstract interpretation which is thread-modular, while remaining flow-sensitive within each thread.}

\paragraph{Abstract Interpretation.}
To set up an abstract interpretation, it suffices to let unknowns range over values from a suitable abstract domain $\D$
instead of over $2^\Traces$
and replace functions $\semT{\cdot}$, $\init$, and $\new$  with suitable abstract counterparts
$\aSem{\cdot}$, $\init^\sharp$, and $\new^\sharp$.
Adapting constraints \labelcref{eq:cons-init,eq:cons-control-flow,eq:cons-observable-action,eq:cons-observing-action,eq:cons-create} in this way,
one obtains the abstract constraint system.
For simplicity, we assume that $\D$ is a complete lattice
with least element $\bot$ (denoting unreachability),
greatest element $\top$, and that $\sqcup$ is used to denote the join (binary least-upper-bound).
Furthermore, we require a monotonic concretization function $\gamma_{\D}: \D \to 2^\Traces$.
An analysis is \emph{sound} when $\aSem{\cdot}$, $\init^\sharp$, and $\new^\sharp$ are sound abstractions (w.r.t.\ $\gamma_{\D}$)
of their concrete counterparts.
Our considerations in this paper are generic in the base analysis and abstract domain.
We assume that it is sound but do not provide details.

\section{Digests as Concurrency-Sensitivity}\label{s:digests}
Digests~\cite{schwarz2023clustered,hammerandnail,schwarzthesis} are a way to refine abstract interpretations of
multi-threaded programs to yield more precise
results.
Unknowns are split according to an abstraction of the computational past (the \emph{digest}).
Then, the abstract value at each unknown only describes those local traces that agree w.r.t. the digest.
This approach can be understood as a generalization of trace partitioning~\cite{Handjieva98,Mauborgne05,Rival07,Montagu21}
to a multi-threaded setting.
It offers a convenient way to specify which aspects of the computational history to consider and what to abstract away.

Digests exploit that, when at observing actions different views about the computational past are combined,
some views are known to be incompatible already after considering their abstraction as given in the digest alone.
These combinations then do not need to be considered by the analysis.
Technically, digests overapproximate the concrete trace compatibility:
If two digests are \emph{incompatible}, so are all traces they represent.
However, the reverse does not necessarily hold: digests may be compatible when the corresponding traces are, in fact, not.

More formally, let $\Digests$ be a set and
$\alpha_\Digests:\Traces {\to} \Digests$ a function to extract such information from a local trace.
Given a local trace $t$, we refer to $\alpha_\Digests\,t$ as its \emph{digest}.
We require that for each observable action $\act\in\Actions$, there is a function $\semDigests{\act}: \Digests\to\Digests\to 2^\Digests$
overapproximating the set of resulting digests of traces obtained by executing $\act$.
Similarly, for non-observable actions $\act$ and functions $\semDigests{\act}: \Digests\to 2^\Digests$.
More formally, for an edge $e = (u,\act,v)$ we require
\begin{equation}
    \begin{array}{rlr}
        \forall t_0,t_{1} \in \Traces:
        &\alpha_\Digests(\semT{e}(t_0,t_{1}))
        \subseteq \semDigests{\act}(\alpha_\Digests\,t_0,\alpha_\Digests\,t_{1}) & (\act \text{ observable})\\
        \forall t_0 \in \Traces:&\alpha_\Digests(\semT{e}(t_0))
        \subseteq \semDigests{\act}(\alpha_\Digests\,t_0) & (\act \text{ non-observable})
    \end{array}
    \label{def:ASound}
\end{equation}
where $\alpha_\Digests$ is lifted element-wise to sets.
Intuitively, this corresponds to $\aSemDigest{\cdot}$
soundly overapproximating $\semT{\cdot}$.
Additionally, we require $\semDigests{\act}$ to be deterministic, i.e.,
to either yield an empty set of digests or a singleton set:\footnote{
    This assumption simplifies reasoning about digests. To lift this restriction,
    one can instead consider \emph{abstract digests}~\cite{hammerandnail} where digests
    come from (complete) lattices. As (normal) digests are already quite expressive, we stick
    with them here for simplicity.
}
\begin{equation}
    \begin{array}{rlr}
    \forall A_0,A_1 \in \Digests:&
    \abs{\semDigests{\act}(A_0,A_{1})}\leq 1 & (\act \text{ observable})\\
    \forall A_0 \in \Digests:&
    \abs{\semDigests{\act}(A_0)}\leq 1 & (\act \text{ non-observable})
    \label{def:ADet}
    \end{array}
\end{equation}
As an abstract counterpart for $\newNoArgs$, we require a function $\newDigests: \Digests\to\Nodes\to 2^\Digests$
that returns for a thread starting execution at program point $u_1$ the digest of this new trace.
We once again require determinism in the sense discussed above.
\begin{equation}
    \begin{array}{l}
        \forall t_0 {\in} \Traces:
        \alpha_\Digests(\new\,t_0\,u\,u_1) \subseteq \newDigests\,{(\alpha_\Digests\,t_0)}\,u_1
        \quad\;
        \forall A_0 {\in} \Digests: \abs{\newDigests\,A_0\,u_1}\leq 1
    \end{array}
    \label{def:AnewSound2}
\end{equation}
Furthermore, we require that whenever there is a successor digest for the edge corresponding to creating a new thread,
there is also an appropriate digest for the newly created thread, i.e.,
\begin{equation}
    \begin{array}{lll}
    \semDigests{\create(u_1)}(A_0) \neq \emptyset \implies
    \newDigests\,A_0\,u_1 \neq \emptyset
    \end{array}
    \label{def:AnewSound3}
\end{equation}
\noindent For the initial digest at program start, we define:
\begin{equation}
    \begin{array}{lll}
        \initDigests &=& \{\alpha_\Digests\,t \mid t \in\init\}
    \end{array}
	\label{def:AinitSound}
\end{equation}
Deviating from earlier work, we here additionally require a property called
\emph{access stability}, which intuitively states that executing an access
sequence of the form $\lock(m_g); x = g; \unlock(m_g)$ or
$\lock(m_g); g = x; \unlock(m_g)$ does not affect the digest, whenever it is defined.
More formally, consider a sequence $\bar \chi$ consisting of an access $\chi$ (read or write) to a global $g$ and
locking and unlocking the atomicity mutex $m_g$.
For such \emph{access sequences} of the form
    $(v^i_0,\lock(m_g),v^i_{1}) \cdot (v^i_{1},\chi,v^i_{2}) \cdot (v^i_{2},\unlock(m_g),v^i_{3})$,
%
we define the effect on the digest by
\[
  \semDigests{\bar \chi}(A_0,A_1) =
  \left(\semDigests{\unlock(m_g)} \circ \semDigests{\chi} \circ \semDigests{\lock(m_g)}\right) (A_0, A_1).
\]
Here, composition is lifted to return $\emptyset$ when an intermediate computation does so, and
we identify singleton sets of resulting digests with their members.
We call a digest \emph{access stable}, if for all access sequences $\bar \chi$ we have
\begin{equation}
  \forall A_0 \in\Digests: \textstyle{\bigcup_{A_1 \in\Digests}} \semDigests{\bar \chi}(A_0,A_1) \subseteq \{ A_0 \}.
  \label{def:Astable}
\end{equation}
\begin{definition}\label{def:digest}
    A set $\Digests$ together with functions $\semDigests{\act}$, $\newDigests$,
    and $\initDigests$ is called \emph{digest}.
    If properties \eqref{def:ASound} - \eqref{def:AinitSound} hold for a digest, it is called \emph{admissible}.
    If property \eqref{def:Astable} also holds, the digest is also called \emph{access stable}.
\end{definition}
We remark that the question of admissibility is independent of the used analysis and the abstract domain, and is
tied to the concrete semantics only.
We take the freedom to refer to both the structure such as in \cref{def:digest} as well as to an element of its set $\Digests$
as a \emph{digest}, as the meaning is clear from the context.

A digest then gives rise to a refined abstract constraint system where each of the unknowns is split into several unknowns,
corresponding to the different digests associated with the reaching traces:
Each unknown $[u]$ for a program point $u$ is replaced with unknowns $[u,A]$ for $A\in\Digests$, and each
unknown $[\act]$ for an observable action $\act$ is replaced with unknowns $[\act,A]$ for $A\in\Digests$.

The refined constraint system then takes the following form:
\[
    \begin{array}{lllr}
        \,[u_0,A_0] &\sqsupseteq& \init^\sharp \qquad&
        (\text{for } A_0 \in \initDigests)\\[0.5em]

        \,\relax[u_2,A'] &\sqsupseteq& \aSem{(u_1,\act,u_2)}(\,\relax[u_1,A_0]) &
        \quad (\text{for } (u_1,\act,u_2) \in\Edges, \act \text{ not observing},\\
        \span\span\span\quad A' \in \semDigests{\act}\,A_0)\\[0.5em]

        \,\relax[\act,A'] &\sqsupseteq& \aSem{(u_1,\act,u_2)}(\,\relax[u_1,A_0]) &
         (\text{for } (u_1,\act,u_2) \in\Edges, \act \text{ observable},\\
         \span\span\span\quad A' \in \semDigests{\act}\,A_0)\\[0.5em]

        \,\relax[u_2,A'] &\sqsupseteq& \aSem{(u_1,\act,u_2)}
        (\,\relax[u_1,A_0], \,\relax[\act',A_1]) \span (\text{for } (u_1,\act,u_2) \in\Edges,\\
        \span\span\span \act \text{ observing},
            \act' \text{ observed by } \act, A' \in \semDigests{\act}\,(A_0,A_{1}))\\[0.5em]

        \,\relax[u',A'] &\sqsupseteq& \new^\sharp(\,\relax[u_1,A_0],u_1,u') &
          (\text{for } (u_1,\create\,u',u_2) \in\Edges,  A' \in \newDigests\,A_0\,u').
    \end{array}
\]
We quickly state the main soundness theorem for such refined analyses:
\begin{proposition}
    Refining a sound analysis with an admissible digest yields a sound refined analysis.
\end{proposition}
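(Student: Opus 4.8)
The plan is a standard fixpoint-transfer argument. Write $\eta$ for the least solution (over $2^{\Traces}$) of the concrete constraint system \eqref{eq:cons-init}--\eqref{eq:cons-observing-action}, so that $\eta[u]$ collects the local traces ending in $u$ and $\eta[\act]$ those ending with an observable action $\act$, and write $\eta^{\sharp}$ for the least solution of the refined abstract constraint system --- the object the analyzer actually computes, now indexed by the split unknowns. I take \emph{soundness of the refined analysis} to mean that every reachable trace is captured at the unknown tagged by its own digest, i.e.\ $t\in\gamma_{\D}(\eta^{\sharp}[u,\alpha_{\Digests}\,t])$ for all program points $u$ and all $t\in\eta[u]$, and likewise $t\in\gamma_{\D}(\eta^{\sharp}[\act,\alpha_{\Digests}\,t])$ for all observable $\act$ and $t\in\eta[\act]$; the coarser statement $\eta[u]\subseteq\bigcup_{A\in\Digests}\gamma_{\D}(\eta^{\sharp}[u,A])$ then follows immediately.

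First I would define a candidate concrete assignment $\tilde\gamma$ on the \emph{unrefined} unknowns by collapsing $\eta^{\sharp}$ along its digest tags:
\[
  \tilde\gamma[u]=\{\,t\in\Traces \mid t\in\gamma_{\D}(\eta^{\sharp}[u,\alpha_{\Digests}\,t])\,\},\qquad
  \tilde\gamma[\act]=\{\,t\in\Traces \mid t\in\gamma_{\D}(\eta^{\sharp}[\act,\alpha_{\Digests}\,t])\,\}.
\]
The heart of the proof is then to show that $\tilde\gamma$ is a solution of the concrete constraint system \eqref{eq:cons-init}--\eqref{eq:cons-observing-action}; by leastness of $\eta$ this gives $\eta\sqsubseteq\tilde\gamma$, which is exactly the soundness statement above.

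Verifying that $\tilde\gamma$ satisfies each constraint proceeds constraint by constraint, and every case follows the same recipe. Take a concrete trace $t'$ appearing on a right-hand side, produced from operand traces lying in the relevant components of $\tilde\gamma$ --- hence in $\gamma_{\D}$ of the corresponding split unknowns, each tagged by its own digest. Then invoke admissibility of the digest to locate $\alpha_{\Digests}\,t'$ among the tags to which the matching refined constraint propagates: for \eqref{eq:cons-control-flow} and \eqref{eq:cons-observable-action} via $\semDigests{\act}$ and property \eqref{def:ASound}; for \eqref{eq:cons-observing-action} via the two-argument form of \eqref{def:ASound}, applied to the digests of the ego trace and the observed trace; for \eqref{eq:cons-create} via the admissibility conditions \eqref{def:AnewSound2}--\eqref{def:AnewSound3} on $\newDigests$ (the creating thread's own continuation past the create edge falling under the \eqref{eq:cons-control-flow} case); and for \eqref{eq:cons-init} via \eqref{def:AinitSound}. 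Finally combine the pointwise soundness of the base transformers ($\aSem{\cdot}$, $\init^{\sharp}$, $\new^{\sharp}$ all overapproximate their concrete counterparts through $\gamma_{\D}$) with monotonicity of the set-lifted concrete transformers and of $\gamma_{\D}$ to conclude that $t'$ lies in $\gamma_{\D}$ of $\eta^{\sharp}$ at the target unknown tagged by $\alpha_{\Digests}\,t'$, i.e.\ $t'\in\tilde\gamma[\,\cdot\,]$.

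I expect the only genuine obstacle to be the index bookkeeping: one has to be careful that the \emph{single} digest $\alpha_{\Digests}\,t'$ of each produced trace is always one of the tags the corresponding refined constraint enumerates --- which is precisely what the admissibility requirements \eqref{def:ASound}--\eqref{def:AinitSound} buy --- and to keep the arities straight between observing edges, whose transformer and refined constraint additionally consume the observed unknown, and all other edges. Everything else is the routine observation that soundness and monotonicity of individual transformers lift to soundness of least solutions. It is worth noting that neither access stability \eqref{def:Astable} nor determinism \eqref{def:ADet} of the digest transformers is used for this proposition; only admissibility is required.
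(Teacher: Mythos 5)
Your proof is correct, but it takes a genuinely different route from the paper's. The paper only sketches the argument and factors it into two steps, following~\cite[Chapter 2.3]{schwarzthesis} and~\cite{hammerandnail}: first, a \emph{refined concrete semantics} is introduced in which the concrete unknowns themselves are split by digests, and admissibility is used to show this refined collecting semantics covers the original one; second, the refined abstract constraint system is shown sound w.r.t.\ this refined concrete system, which is just the generic soundness of an abstract interpretation of a constraint system and uses only the pointwise soundness of $\aSem{\cdot}$, $\init^\sharp$, and $\new^\sharp$. You collapse these two steps into a single fixpoint-transfer argument: you build $\tilde\gamma$ directly from the refined abstract solution and verify that it solves the \emph{original} concrete constraint system, invoking admissibility exactly where the paper's first step would (\eqref{def:ASound} in its one- and two-argument forms, \eqref{def:AnewSound2} for creates, \eqref{def:AinitSound} initially) and base soundness where its second step would, then concluding by leastness of the concrete solution. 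Your version is more elementary and self-contained; the paper's factorization buys modularity---the digest-specific reasoning is isolated in the refined concrete semantics, and the abstract step is reusable unchanged for any sound base analysis---and lets the paper defer details to prior work. Two minor remarks: \eqref{def:AnewSound3} is not actually needed in your argument (only \eqref{def:AnewSound2} enters for constraint \eqref{eq:cons-create}; \eqref{def:AnewSound3} matters for relating the created thread's digest to the creating edge in the refined semantics construction), and your observation that determinism \eqref{def:ADet} and access stability \eqref{def:Astable} are not used is accurate---the latter only becomes relevant for the race-detection results later in the paper. Note also that your argument works verbatim for any post-solution of the refined system, not just the least one, which is what a solver actually computes.
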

\begin{proof}[Sketch]
    The proof proceeds by first proving a refined version of the concrete semantics sound w.r.t.\ the original semantics
    and then showing that the refined analysis is sound w.r.t.\ the refined semantics, provided the original analysis
    is sound w.r.t.\ the original semantics (see~\cite[Chapter 2.3]{schwarzthesis}
    and~\cite{hammerandnail}).
    $\hfill\qed$
\end{proof}
Consider the local trace semantics with the actions described in the previous section.
We provide a first example digest for this setting.
\begin{example}
A lightweight thread \emph{id} digest may track for each thread whether
\begin{compactitem}
\item it is the main thread and has not started any other threads ($\STmain$), or
\item it is the main thread and has started other threads ($\MTmain$), or
\item it is some other thread ($\MT$).
\end{compactitem}
The corresponding definitions are given in \cref{f:threadingmainA}.
$\aSemDigest{\lock(a)}$ exploits
that if the digest for the ego thread is $\STmain$, i.e., the ego thread is the main thread and has not created other threads yet, it cannot acquire a mutex from another thread (with the $\MT$ digest) or from itself already having created threads (with the $\MTmain$ digest).
Thus, in these cases, it yields $\emptyset$.
However, the result of $\aSemDigest{\lock(a)}(\MTmain,\STmain)$ cannot be set to $\emptyset$, as the last unlock of the mutex $a$ can
in fact have happened while the main thread was single-threaded, even if further threads were created in the meantime.
This digest is \emph{access stable}.
$\hfill\qed$
\end{example}
\begin{figure*}[t!]
    \begin{minipage}[t]{.4\linewidth}\[
        \begin{array}{lll}
        \Digests = \{ \STmain, \MTmain, \MT \}\\[0.5ex]
		\initDigests = \{\STmain\}\\[0.5ex]
        \newDigests\,M\,u_1 = \{\MT\}
        \end{array}%
    \]
    \end{minipage}%
    \begin{minipage}[t]{.4\linewidth}\[
        \begin{array}{lll}
        \semDigests{\create\,u_0}\,M =
        \begin{cases}
            \{\MT\} & \text{if } M = \MT\\
            \{\MTmain\} & \text{otherwise}\\
        \end{cases}
        \\[0.5ex]
		\semDigests{\act}\,M = \{M\} \qquad\qquad \text{(other non-observing)}\qquad\;\;
        \end{array}%
    \]
    \end{minipage}\\~\\
    \begin{minipage}[t]{\linewidth}\[
        \begin{array}{lll}
            \semDigests{\lock(a)}\,(M_0,M_1) &=& \begin{cases}
                \emptyset & \text{if } M_0 = \STmain \land M_1 \neq \STmain\\
                \{M_0\} & \text{otherwise}
            \end{cases} 
        \end{array}
        \]
    \end{minipage}
    \caption{Digest for lightweight thread \emph{id}s.
    }\label{f:threadingmainA}
\end{figure*}

\begin{example}
    Consider the program in \cref{l:prog1}. The MT-digests reachable after each statement is executed are annotated.
    Here, for each program point only one digest is reachable.
    Reasoning about data races using digests, we will later be able to conclude that line \ref{line:mg:firstAccessInMain}
    does not race with either of the other accesses.
    $\hfill\qed$
\end{example}
\begin{remark}
By setting $\mathcal{D} = \{\bot, \bullet\}$ with  $\gamma_{\D}(\bullet) = \Traces$ 
one can obtain a trivial, sound analysis. This way, one can analyze a program
using only digests for reasoning. 
\end{remark}

\section{Races in Terms of Local Traces}\label{s:lt-traces}
To arrive at techniques for checking data races that are based on a thread-modular abstract interpretation of the
local trace semantics, we first precisely define what exactly constitutes
a race in the context of the local trace semantics.

We first make the following observation about the atomicity mutexes we have introduced:
In race-free programs, any two conflicting accesses 
are ordered w.r.t.\ each other.
Thus, the extra edges and dependencies introduced in the local trace semantics for atomicity mutexes
only serve to make transitive dependencies between conflicting accesses in race-free executions explicit.
Additionally, an ordering between read operations is introduced. However, as both orderings of these reads are considered,
and executions only differing in the order in which such reads are performed are indistinguishable, all executions
are still captured.
In fact, by virtue of having atomicity mutexes, the local trace semantics assigns a meaning even to
executions which would not traditionally have one, as they contain races.

To detect whether two accesses are racy, it thus suffices to check whether these appear in any local trace
such that they are unordered but for their special atomicity mutex, leading to the following definition:
\begin{definition}[Racy Accesses]
    Let $g$ be a global variable and $\chi_a$ and $\chi_b$ be two accesses to $g$.
    Accesses $\chi_a$ and $\chi_b$ are \emph{racy} if at least one is a write and
    there exists a local trace in which these accesses are unordered
    w.r.t.\ each other when not considering the order provided by the $m_g$ mutex.
\end{definition}
%
A local trace containing a racy access to $g$ is visualized in \cref{f:prog0lt}.

While this definition corresponds closely to the intuition,
it does not directly lend itself to abstraction.
We thus provide an alternative definition that can be given in terms of the
functions $\semT{\cdot}$ of the local trace semantics, which is better amenable to abstraction,
and prove them equivalent.

The latter definition builds on the following observation: When accesses are unordered when ignoring the $m_g$ mutexes,
one can, from a local trace containing both accesses and ending in one of them,
construct a local trace where the order of these accesses is flipped.

We first introduce some helpful notation: Consider once again an access sequence
$\bar \chi \equiv (v_0,\lock(m_g),v_{1}) \cdot (v_{1},\chi,v_{2}) \cdot (v_{2},\unlock(m_g),v_{3})$
where $\chi$ is an access to the global $g$ (either read or write).
We, akin to the definition of $\aSemDigest{\bar\chi}$, denote
by $\semT{\bar \chi}(t_0,t_1)$ the function
$(\semT{(v_{2},\unlock(m_g),v_{3})} \circ \semT{(v_{1},\chi,v_{2})} \circ \semT{(v_0,\lock(m_g),v_{1})}) (t_0,t_1)$
with composition lifted to return $\emptyset$ when an intermediate computation does so.

\begin{definition}[Bidirectional Compatibility]\label{def:bidirectional-compatibility}
    We call two access sequences $\bar \chi_a$ and $\bar \chi_b$ for a global $g$ bidirectionally (trace)
    compatible if
    \[
    \begin{array}{lll}
        \exists t_a, t_b, t_l \in \Traces:
        \left(\semT{\bar \chi_b}(t_b,\semT{\bar \chi_a}(t_a,t_l)) \neq \emptyset\right) \land
        \left(\semT{\bar \chi_a}(t_a,\semT{\bar \chi_b}(t_b,t_l) \neq \emptyset)\right),
    \end{array}
    \]
    i.e., both orders of accesses are possible.
\end{definition}
Here, $t_l$ corresponds to the local trace ending in the last unlock of $m_g$ before the pair of
considered accesses (or the initialization if no accesses have happened yet), whereas $t_a$ corresponds to the local
trace ending in the last action of the thread performing $\bar \chi_a$ before the pair of considered accesses,
and conversely for $t_b$ and $\bar \chi_b$.
It thus remains to relate the two definitions:
\begin{theorem}
    Two access sequences for a global variable $g$---at least one of which corresponds to a write---are bidirectionally compatible if and only if the accesses are racy.
\end{theorem}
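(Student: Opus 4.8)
The plan is to prove both directions of the equivalence by moving between a global, "spatial" view of a single local trace (in which the two accesses literally appear unordered except via $m_g$) and the constructive, "operational" view afforded by the semantic functions $\semT{\cdot}$.

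\textbf{From racy to bidirectionally compatible.} Suppose $\chi_a,\chi_b$ are racy, witnessed by a local trace $t$ containing both access sequences $\bar\chi_a$ and $\bar\chi_b$ unordered except via $m_g$. First I would argue that, since the only causal link between the two accesses in $t$ is the $m_g$ lock order, and that order is a total order on the $\lock(m_g)$/$\unlock(m_g)$ operations, exactly one of the two accesses' $\lock(m_g)$ directly follows the other's $\unlock(m_g)$ (or both follow a common earlier $\unlock(m_g)$ / $\init(m_g)$). WLOG say $\bar\chi_a$ precedes $\bar\chi_b$ in the $m_g$ order. I then take $t_l$ to be the sub-local-trace of $t$ ending at the $\unlock(m_g)$ immediately before $\bar\chi_a$ (or the relevant $\init(m_g)$), $t_a$ the sub-trace of the thread performing $\bar\chi_a$ ending just before it, and $t_b$ likewise for $\bar\chi_b$; these exist as prefixes/restrictions of $t$ under the causality order. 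By construction $\semT{\bar\chi_a}(t_a,t_l)$ reproduces the portion of $t$ up to $\bar\chi_a$'s unlock, and feeding that as the observed trace to $\semT{\bar\chi_b}(t_b,\cdot)$ reproduces $t$ up to $\bar\chi_b$'s unlock — so the first conjunct of bidirectional compatibility holds (essentially because $t$ itself is a valid local trace and $\semT{\cdot}$ is exactly the trace-prolongation function). For the second conjunct I must exhibit the flipped order. Here I use that the two accesses are unordered in $t$ other than through $m_g$: nothing in $t_a$, $t_b$, or $t_l$ forces $\bar\chi_a$ before $\bar\chi_b$ once the $m_g$ order is the only thing relating them, so swapping which access grabs $m_g$ first yields a well-formed local trace. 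Concretely, $\semT{\bar\chi_b}(t_b,t_l)$ is defined (it is just $\bar\chi_b$ acting on a trace ending in an $\unlock/\init$ of $m_g$, with $t_b$ providing the thread-local precondition — none of the compatibility side-conditions for $\lock(m_g)$ beyond the $m_g$ order can fail, because they did not fail for $\bar\chi_a$ in the symmetric situation), and then $\semT{\bar\chi_a}(t_a,\cdot)$ on the result is likewise defined.

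\textbf{From bidirectionally compatible to racy.} Conversely, given witnesses $t_a,t_b,t_l$ with both $\semT{\bar\chi_b}(t_b,\semT{\bar\chi_a}(t_a,t_l))\neq\emptyset$ and $\semT{\bar\chi_a}(t_a,\semT{\bar\chi_b}(t_b,t_l))\neq\emptyset$, I would take the local trace $t := \semT{\bar\chi_b}(t_b,\semT{\bar\chi_a}(t_a,t_l))$ (a genuine element of $\Traces$, since $\semT{\cdot}$ only ever produces valid local traces, and $t_l$ being a real local trace ending in an $\unlock/\init$ of $m_g$ is what makes the observations legal). This $t$ contains both $\bar\chi_a$ and $\bar\chi_b$. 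It remains to check they are unordered in $t$ except via $m_g$. The causality order in $t$ is generated by program orders within swimlanes plus the cross-thread dependencies introduced by the observations performed. Within a single swimlane the two accesses cannot both lie: $t_a$ and $t_b$ are, by the role they play, traces of (at most) the threads performing $\bar\chi_a$ and $\bar\chi_b$ respectively, and a same-thread ordering of the two accesses would force one order permanently, contradicting that the other order $\semT{\bar\chi_a}(t_a,\semT{\bar\chi_b}(t_b,t_l))$ is also realizable. The only cross-thread dependency the access sequences themselves add is the $m_g$ lock edge from $\bar\chi_a$'s $\unlock(m_g)$ to $\bar\chi_b$'s $\lock(m_g)$; any further ordering would have to come from dependencies already present among $t_a,t_b,t_l$, but those are "in the past" of both accesses and again cannot order the accesses themselves. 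The "at least one is a write" hypothesis is carried straight through. Hence $\chi_a,\chi_b$ are racy.

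\textbf{Main obstacle.} I expect the delicate part to be the second conjunct in the forward direction — actually constructing the flipped-order trace and arguing that none of the compatibility side-conditions of $\semT{\lock(m_g)}$, $\semT{\chi}$, $\semT{\unlock(m_g)}$ can fail when the accesses are swapped. This requires pinning down precisely which well-formedness conditions on local traces (the locking-order constraints on $\to_{m_g}$, the acyclicity and unique-maximal-element requirements, and any further conditions from \cite{schwarzthesis}) could be violated, and showing that "unordered except via $m_g$" is exactly the hypothesis that rules each of them out. Symmetrically, in the backward direction the careful bookkeeping is in verifying that $t_a$, $t_b$, and $t_l$ really do play the intended roles (ending in the right thread's last pre-access configuration, resp. the last $m_g$-release) so that the reconstructed $t$ has no hidden ordering between the accesses; I would lean on the determinism of $\semT{\cdot}$ (singleton-or-empty results) and the structural description of local traces recalled in \cref{s:lt} to keep this manageable rather than re-deriving the semantics from scratch.
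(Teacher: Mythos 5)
Your proposal matches the paper's own proof in both directions: the racy-to-compatible direction decomposes the witness trace into $t_a$, $t_b$, $t_l$ and reconstructs both orders (the paper likewise justifies the flipped order by noting that no side-condition of the local trace semantics enforces a particular order of $\bar\chi_a$ and $\bar\chi_b$), and the compatible-to-racy direction takes the composed trace and rules out a same-thread (program-order) ordering by observing it would contradict the other composition being non-empty. Apart from minor bookkeeping the paper makes explicit (appending the final $\unlock(m_g)$ to the racy witness before decomposing), your argument is essentially the same as the paper's.
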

\begin{proof}
    We consider the two directions separately:
    \begin{itemize}
        \item[($\Rightarrow$)] Consider two bidirectionally compatible access sequences $\bar \chi_a$ and $\bar \chi_b$,
            with at least one a write.
            Then, $\exists t_a, t_b, t_l \in \Traces:
            \semT{\bar \chi_b}(t_b,\semT{\bar \chi_a}(t_a,t_l)) \neq \emptyset$. Let us call the trace in this set $t'$.
            By the construction of the actions for locking atomicity mutexes, the two accesses in $t'$ are unordered
            safe for the $m_g$ mutexes and any program order edges (as these are the only edges introduced).
            For the accesses to be ordered by program order edges, both access would have to belong to the same thread,
            as program order edges are thread-local.
            However, in this case we would have $\semT{\bar \chi_a}(t_a,\semT{\bar \chi_b}(t_b,t_l)) = \emptyset$, as
            the access $\bar \chi_a$ would already appear in $t_b$.
            Thus, the accesses are unordered safe for $m_g$ mutexes and thus racy.
        \item[($\Leftarrow$)]
            Consider two racy accesses $\chi_a$ and $\chi_b$ of $g$. At least one is
            a write and there exists a local trace $t$ in which these
            accesses are unordered safe for the $m_g$ mutexes.
            Consider the subtrace ending in the second access.
            The accesses in this subtrace remain unordered safe for the $m_g$ mutexes. We prolong this trace by
            appending the unlock operation of $m_g$. This yields another trace, say $t'$, in which the accesses
            remain unordered safe for the $m_g$ mutexes.
            $t'$ now contains sequences $\bar \chi_a, \bar \chi_b$ corresponding to accesses to $g$ and
            locking and unlocking $m_g$.

            Then, let the subtrace of $t'$ ending in the
            last program point preceding the access sequence of the thread performing the first access be called
            $t_a$ and conversely for $t_b$ and the second access. Furthermore, let $t_l$ denote the subtrace ending
            in the last $\unlock(m_g)$ before either of these access sequences or in $\init(m_g)$ if no such unlock exists.
            Then,
            $\semT{\bar \chi_b}(t_b,\semT{\bar \chi_a}(t_a,t_l)) = t'$ and
            $\semT{\bar \chi_a}(t_a,\semT{\bar \chi_b}(t_b,t_l))$ also yields a local trace,
            as both $t_a$ and $t_b$ originate from the same computation, $\semT{\bar \chi_b}(t_b,t_l)$
            ends in $\unlock(m_g)$, and there are no requirements enforcing a particular order of $\bar \chi_a$ and $\bar \chi_b$.
            Thus, $\bar \chi_b$ and $\bar \chi_a$ are bidirectionally compatible. $\hfill\qed$
    \end{itemize}
\end{proof}
Thus, both definitions coincide.
\begin{figure*}
    \ltDrawingDefs
        \centering
        \begin{tikzpicture}
            \node[programpointwide](tmpp1){};

            \programon{tmpp1}{\init(m_g)}{tmpp2}
            \programon[wide]{tmpp2}{\create(u8)}{tmpp3}

            \programon[wide]{tmpp3}{\lock(m_g)}{tmpp4}
            \programon{tmpp4}{g=3}{tmpp5}
            \programon[wide]{tmpp5}{\unlock(m_g)}{tmpp6}

            \node[programpointnarrow,below left=of tmpp3,draw=none](t1pp0){};
            \node[programpointwide,right of=t1pp0](t1pp1){};

            \programon[wide]{t1pp1}{\lock(m_g)}{t1pp2}
            \programon{t1pp2}{g=5}{t1pp3}
            \programon[widesink]{t1pp3}{\unlock(m_g)}{t1pp5}

            \mutexo[below=1mm]{tmpp6}{t1pp2}{m_g}{out=330,in=110}
            \mutexo[below=1mm]{tmpp2}{tmpp4}{m_g}{out=340,in=200}
            \createo[below=2mm]{tmpp2}{t1pp1}

        \end{tikzpicture}
    \caption{An example local trace of a racy program.}\label{f:prog0lt}
\end{figure*}
\begin{remark}
    Unlike in many other definitions, the local trace semantics assigns meaning even to executions past their first racy access, assuming sequential consistency.
    Thus, the local trace semantics has a notion of traces containing multiple races.
    However, our notion of how program execution continues \emph{after the first race} need not coincide with other notions:
    For instance, \emph{garbled} writes are not considered in the local trace semantics.
    A semantics that allows for such garbled writes may then disagree with the local trace semantics on races that happen after the first race.
\end{remark}

\section{Races in Terms of Digests}\label{s:races-digest}
With the previous section having established bidirectional trace compatibility as a key ingredient to a formal definition
of data races in the local trace semantics, we now turn to the question of designing sound abstractions of
bidirectional trace compatibility that can be used in practical analyses.
To this end, we introduce for each digest a family of new (commutative) predicates
\[
    (||^{?}_g): \Digests \to \Digests \to \{ \textsf{false}, \top  \}
\]
which is meant to express whether two accesses to global $g$ associated with the respective digests can happen in parallel.
Here, $\textsf{false}$ means that two accesses with these digests can definitely not happen in parallel,
whereas $\top$ means that this cannot be excluded.\footnote{As the digests are an overapproximation, they
do not generally carry enough information to determine that two accesses can definitely happen in parallel.}

We tie this definition
directly to the definition of bidirectional trace compatibility (\cref{def:bidirectional-compatibility}).
We call $||^{?}_g$ \emph{sound} if for all traces $t_a$, $t_b$, $t_l$ and sequences
$\bar \chi_a$ and $\bar \chi_b$  of accesses to some global $g$ as above, the following implication holds:
\begin{equation}
    \begin{array}{lll}
        \emptyset \neq \semT{\bar \chi_b}(t_b,\semT{\bar \chi_a}(t_a,t_l))
        \land  \emptyset \neq \semT{\bar \chi_a}(t_a,\semT{\bar \chi_b}(t_b,t_l))\\
        \qquad\qquad\implies (\alpha_{\Digests}\,t_a)\;||^{?}_g\;(\alpha_{\Digests}\,t_b) = \top
    \end{array}
    \label{eq:mhp-sound}
\end{equation}

\subsection{Bidirectional Digest Compatibility}
A natural question arises:
Can we soundly derive a predicate $||^{?}_g$ for any admissible, access stable digest?
To this end, we first define the notion of \emph{bidirectional digest compatibility}:
\begin{definition}
    Sequences $\bar \chi_a$ and $\bar \chi_b$ corresponding to accesses to a global $g$ are called \emph{bidirectionally digest
    compatible} if
    \begin{equation}
    \begin{array}{lll}
        \exists A_a, A_b, A_l \in \Digests:\\
        \quad \left(\aSemDigest{\bar \chi_b}(A_b,\aSemDigest{\bar \chi_a}(A_a,A_l)) \neq \emptyset\right) \land
        \left(\aSemDigest{\bar \chi_a}(A_a,\aSemDigest{\bar \chi_b}(A_b,A_l) \neq \emptyset)\right).
    \end{array}
            \label{eq:bidirec-race}
    \end{equation}
\end{definition}
\begin{theorem}
    Bidirectional trace compatibility implies bidirectional digest compatibility.
\end{theorem}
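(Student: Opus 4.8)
The plan is to unfold both definitions and push the witnessing traces of bidirectional trace compatibility through the digest abstraction $\alpha_\Digests$, using the soundness conditions \eqref{def:ASound}, \eqref{def:ADet} and access stability \eqref{def:Astable}. Concretely, suppose $\bar\chi_a$ and $\bar\chi_b$ are bidirectionally trace compatible, so there exist $t_a,t_b,t_l\in\Traces$ with $\semT{\bar\chi_b}(t_b,\semT{\bar\chi_a}(t_a,t_l))\neq\emptyset$ and $\semT{\bar\chi_a}(t_a,\semT{\bar\chi_b}(t_b,t_l))\neq\emptyset$. Set $A_a=\alpha_\Digests\,t_a$, $A_b=\alpha_\Digests\,t_b$, $A_l=\alpha_\Digests\,t_l$. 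I would then show these three digests witness bidirectional digest compatibility.

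First I would establish a general lemma: for any access sequence $\bar\chi$ and any traces $s_0,s_1$, if $\semT{\bar\chi}(s_0,s_1)\neq\emptyset$ then $\aSemDigest{\bar\chi}(\alpha_\Digests\,s_0,\alpha_\Digests\,s_1)\neq\emptyset$, and moreover (by access stability) the result equals $\{\alpha_\Digests\,s_0\}$. This follows by applying the soundness inequality of \eqref{def:ASound} three times along the composition $\semT{(v_2,\unlock(m_g),v_3)}\circ\semT{(v_1,\chi,v_2)}\circ\semT{(v_0,\lock(m_g),v_1)}$ that defines $\semT{\bar\chi}$: each concrete step is overapproximated by the corresponding $\semDigests{\cdot}$ step, so non-emptiness propagates forward; since $\semDigests{\bar\chi}$ is the analogous composition of $\semDigests{\lock(m_g)}$, $\semDigests{\chi}$, $\semDigests{\unlock(m_g)}$ (with composition returning $\emptyset$ on intermediate $\emptyset$), it too is non-empty. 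Access stability \eqref{def:Astable} then pins the value down to $\{\alpha_\Digests\,s_0\}$. Note determinism \eqref{def:ADet} guarantees $\semDigests{\bar\chi}$ returns a singleton, so identifying it with its member is unambiguous.

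With this lemma in hand the theorem is a short computation. From $\semT{\bar\chi_a}(t_a,t_l)\neq\emptyset$, let $t_l'$ be the resulting trace; the lemma gives $\aSemDigest{\bar\chi_a}(A_a,A_l)=\{A_a\}\neq\emptyset$. Since $\semT{\bar\chi_b}(t_b,t_l')\neq\emptyset$, applying the lemma again with $s_0=t_b$, $s_1=t_l'$ yields $\aSemDigest{\bar\chi_b}(A_b,\alpha_\Digests\,t_l')\neq\emptyset$; because $\alpha_\Digests\,t_l'\in\alpha_\Digests(\semT{\bar\chi_a}(t_a,t_l))\subseteq\aSemDigest{\bar\chi_a}(A_a,A_l)$, i.e.\ $\alpha_\Digests\,t_l'=A_a$, this is exactly $\aSemDigest{\bar\chi_b}(A_b,\aSemDigest{\bar\chi_a}(A_a,A_l))\neq\emptyset$, the first conjunct of \eqref{eq:bidirec-race}. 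The symmetric argument starting from $\semT{\bar\chi_a}(t_a,\semT{\bar\chi_b}(t_b,t_l))\neq\emptyset$ gives the second conjunct. Hence $A_a,A_b,A_l$ witness bidirectional digest compatibility.

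The main obstacle I anticipate is purely bookkeeping: keeping straight which trace is fed as the observed argument ($t_l$, or the result of the first access sequence) versus the ego argument ($t_a$ or $t_b$), and checking that access stability really is what is needed to conclude $\alpha_\Digests\,t_l'=A_a$ rather than merely $\alpha_\Digests\,t_l'\in$ some larger set — this is where the determinism assumption \eqref{def:ADet} and the $\subseteq\{A_0\}$ form of \eqref{def:Astable} do the essential work. Without access stability the digest of the intermediate trace could differ from $A_a$, and the nesting in \eqref{eq:bidirec-race} would not typecheck in the way required; so I would be careful to flag exactly where \eqref{def:Astable} is invoked. Everything else is a routine three-fold application of the digest soundness condition.
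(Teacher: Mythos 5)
Your proof is correct and, at its core, takes the same route as the paper's one-line argument: push the concrete witnesses $t_a,t_b,t_l$ through $\alpha_\Digests$ stepwise, using \eqref{def:ASound} to propagate non-emptiness and \eqref{def:ADet} to treat intermediate results as single digests. The one inaccuracy is your claim that access stability \eqref{def:Astable} does "essential work" and that without it the nesting in \eqref{eq:bidirec-race} "would not typecheck." That is not so: soundness and determinism alone already force $\aSemDigest{\bar\chi_a}(A_a,A_l)$ to be exactly the singleton $\{\alpha_\Digests\,t_l'\}$, where $t_l'$ is the concrete intermediate trace --- by \eqref{def:ASound} the set contains $\alpha_\Digests\,t_l'$, and by \eqref{def:ADet} it has at most one element. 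The definition of bidirectional digest compatibility never requires that member to equal $A_a$; it only requires feeding the (unique) member into $\aSemDigest{\bar\chi_b}$, and a second application of \eqref{def:ASound} to $\semT{\bar\chi_b}(t_b,t_l')\neq\emptyset$ yields the first conjunct, symmetrically the second. This is why the paper's proof cites only \eqref{def:ASound}, \eqref{def:ADet}, and totality of $\alpha_\Digests$: the theorem holds for every admissible digest, access stable or not. Access stability is genuinely needed only later, in \cref{prop:generic-mhp}, where the intermediate digest must additionally be identified with $A_a$ to collapse $\aSemDigest{\bar\chi_b}(A_b,\aSemDigest{\bar\chi_a}(A_a,A_l))$ to $\aSemDigest{\bar\chi_b}(A_b,A_a)$. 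As written, your argument is valid for the access stable digests this section considers, but it needlessly narrows the theorem's scope, and the role you assign to \eqref{def:Astable} is in fact played by \eqref{def:ADet}.
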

\begin{proof}
    By repeated application of the properties of digests (\cref{def:ADet,def:ASound}), and $\alpha_{\Digests}$ being total. $\hfill\qed$
\end{proof}
We remark that the opposite direction does not hold: \eqref{eq:bidirec-race} holding for two sequences $\bar \chi_a$ and
$\bar \chi_b$ does not imply that these sequences are in fact bidirectionally trace compatible, as the digests (by design)
introduce an overapproximation.

With this insight, we can now give a definition of the predicate $||^{?}_g$ for admissible and \emph{access stable} digests.
\begin{proposition}\label{prop:generic-mhp}
For an access stable and admissible digest $\Digests$, the definition
\[
    A_a\;||^{?}_g\;A_b = \begin{cases}
        \textsf{false} & \text{if } \aSemDigest{\lock(m_g)}(A_a,A_b) = \emptyset
         \lor \aSemDigest{\lock(m_g)}(A_b,A_a) = \emptyset\\
        \top & \text{otherwise}
    \end{cases}
\]
is sound.
\end{proposition}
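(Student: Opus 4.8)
The plan is to establish the implication \eqref{eq:mhp-sound} directly. Fix traces $t_a, t_b, t_l$ and access sequences $\bar\chi_a$, $\bar\chi_b$ for $g$, and assume both conjuncts of the antecedent hold; write $A_a := \alpha_{\Digests}\,t_a$ and $A_b := \alpha_{\Digests}\,t_b$. Unfolding the definition of $||^{?}_g$, it suffices to show that \emph{neither} $\aSemDigest{\lock(m_g)}(A_a, A_b)$ \emph{nor} $\aSemDigest{\lock(m_g)}(A_b, A_a)$ is empty, since then $A_a \;||^{?}_g\; A_b$ takes the ``otherwise'' branch and equals $\top$. Both non-emptiness facts are obtained by the same derivation with the roles of $a$ and $b$ exchanged, so I describe only how $\aSemDigest{\lock(m_g)}(A_b, A_a) \neq \emptyset$ follows from the first conjunct $\semT{\bar\chi_b}(t_b, \semT{\bar\chi_a}(t_a, t_l)) \neq \emptyset$.

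First I would name the intermediate trace produced by executing $\bar\chi_a$. Since $\semT{\bar\chi_b}(t_b, \cdot)$ lifted point-wise maps $\emptyset$ to $\emptyset$, non-emptiness of the composite forces $\semT{\bar\chi_a}(t_a, t_l) \neq \emptyset$; as $\semT{\bar\chi_a}$ is a composition of maps each returning a singleton or $\emptyset$, there is a unique $t'$ with $\semT{\bar\chi_a}(t_a, t_l) = \{t'\}$, and $t'$ ends in $\unlock(m_g)$. The crucial step is the identity $\alpha_{\Digests}\,t' = A_a$. This is where access stability enters: iterating the digest soundness property \eqref{def:ASound} together with determinism \eqref{def:ADet} and totality of $\alpha_{\Digests}$ through the three-edge composition defining $\semT{\bar\chi_a}$ and $\semDigests{\bar\chi_a}$ --- exactly as in the proof that bidirectional trace compatibility implies bidirectional digest compatibility --- yields $\alpha_{\Digests}\,t' \in \semDigests{\bar\chi_a}(A_a, \alpha_{\Digests}\,t_l)$, and access stability \eqref{def:Astable}, applied with $A_0 = A_a$, bounds $\semDigests{\bar\chi_a}(A_a, \alpha_{\Digests}\,t_l) \subseteq \{A_a\}$. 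Hence $\alpha_{\Digests}\,t' = A_a$.

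Next I would peel off the first action of $\bar\chi_b$. Since $\semT{\bar\chi_b}(t_b, t') \neq \emptyset$ and the first component of the composition defining $\semT{\bar\chi_b}$ is $\semT{\lock(m_g)}$, we have $\semT{\lock(m_g)}(t_b, t') \neq \emptyset$ (otherwise the whole composite collapses to $\emptyset$), and the observed trace $t'$ ends in $\unlock(m_g)$ as required for type-compatibility. Applying digest soundness \eqref{def:ASound} to the action $\lock(m_g)$ gives $\alpha_{\Digests}\bigl(\semT{\lock(m_g)}(t_b, t')\bigr) \subseteq \semDigests{\lock(m_g)}(\alpha_{\Digests}\,t_b, \alpha_{\Digests}\,t')$; the left-hand set is non-empty and $\alpha_{\Digests}\,t' = A_a$, so $\aSemDigest{\lock(m_g)}(A_b, A_a) \neq \emptyset$. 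Running the symmetric argument on the second conjunct $\semT{\bar\chi_a}(t_a, \semT{\bar\chi_b}(t_b, t_l)) \neq \emptyset$ --- with the intermediate trace now the unique member $t''$ of $\semT{\bar\chi_b}(t_b, t_l)$, for which $\alpha_{\Digests}\,t'' = A_b$ by access stability --- gives $\aSemDigest{\lock(m_g)}(A_a, A_b) \neq \emptyset$. Therefore both disjuncts in the definition of $||^{?}_g$ fail, so $A_a \;||^{?}_g\; A_b = \top$, which is the conclusion of \eqref{eq:mhp-sound}.

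I expect the only real subtlety to be the bookkeeping of which digest plays the ``ego'' versus the ``observed'' role in $\semDigests{\lock(m_g)}$: the definition of $||^{?}_g$ pairs $A_a$ and $A_b$ with one another, but in the trace it is the intermediate trace $\semT{\bar\chi_a}(t_a, t_l)$ (resp.\ $\semT{\bar\chi_b}(t_b, t_l)$) that is observed by the $\lock(m_g)$ opening the second access sequence. Access stability is precisely what collapses the digest of that intermediate trace to $A_a$ (resp.\ $A_b$), so that the two roles line up; without it the argument would not go through with the predicate as stated. The remaining steps --- the point-wise lifting behaviour of the $\semT{\cdot}$ maps and the iterated lifting of \eqref{def:ASound}/\eqref{def:ADet} through the composition defining access sequences --- are routine and parallel the already-proved theorem on bidirectional digest compatibility.
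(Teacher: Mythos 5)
Your proposal is correct and follows essentially the same route as the paper's proof: lift the assumed bidirectional trace compatibility to the digest level via \eqref{def:ASound}/\eqref{def:ADet}, use access stability \eqref{def:Astable} to collapse the digest of the intermediate trace to $A_a$ (resp.\ $A_b$), and read off non-emptiness of $\aSemDigest{\lock(m_g)}(A_b,A_a)$ and $\aSemDigest{\lock(m_g)}(A_a,A_b)$, hence $\top$. The only difference is presentational: you name the intermediate trace explicitly and apply soundness once more to the single $\lock(m_g)$ step, whereas the paper lifts the entire composite and then projects, which amounts to the same argument.
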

\begin{proof}
This amounts to showing \cref{eq:mhp-sound}.
Consider two bidirectionally compatible access sequences $\bar \chi_a$ and $\bar \chi_b$ and traces $t_a$, $t_b$, and $t_l$ such that
\[
    \begin{array}{lll}
        \left(\semT{\bar \chi_b}(t_b,\semT{\bar \chi_a}(t_a,t_l)) \neq \emptyset\right) \land
        \left(\semT{\bar \chi_a}(t_a,\semT{\bar \chi_b}(t_b,t_l) \neq \emptyset)\right).
    \end{array}
\]
Let $A_a = \alpha_{\Digests}\,t_a$, $A_b = \alpha_{\Digests}\,t_b$, and $A_l = \alpha_{\Digests}\,t_l$.
By the properties of digests (\cref{def:ADet,def:ASound}), also
\[
\begin{array}{lll}
    &\emptyset \neq
    \aSemDigest{\bar \chi_b}(A_b,\aSemDigest{\bar \chi_a}(A_a,A_l))
    \land
    \emptyset \neq \aSemDigest{\bar \chi_a}(A_a,\aSemDigest{\bar \chi_b}(A_b, A_l)).
\end{array}
\]
As the digest is \emph{access stable} (\cref{def:Astable}), this implies
$\emptyset \neq
    \aSemDigest{\bar \chi_b}(A_b, A_a)
    \land
    \emptyset \neq \aSemDigest{\bar \chi_a}(A_a, A_b)
$
which in turn implies
$\emptyset \neq \aSemDigest{\lock(m_g)}(A_b, A_a) \land
\emptyset \neq \aSemDigest{\lock(m_g)}(A_a, A_b)$,
and thus $A_a \;||^{?}_g\; A_b = \top$.
$\hfill\qed$
\end{proof}

\begin{example}
    Consider again the program in \cref{l:prog1}, which is annotated using the digest for lightweight thread \emph{id}s (\cref{f:threadingmainA}).
    For the two access sequences starting after line \ref{line:mg:firstAccessSequenceInMain} and line \ref{line:mg:accessSequenceInT1}, with associated digests $\STmain$ and $\MT$, it can be excluded that
    these two accesses race with each other: While
    $\semDigests{\lock(m_g)}\,(\MT,\STmain) = \{\MT\} \neq \emptyset$,
    $\semDigests{\lock(m_g)}\,(\STmain,\MT) = \emptyset$ and the respective sequences thus
    are known to not be bidirectionally compatible.
    $\hfill\qed$
\end{example}
This definition in \cref{prop:generic-mhp} is sound and generic---but can be overly conservative:

\begin{figure*}[t]
    \begin{minipage}[t]{.45\linewidth}\[
        \begin{array}{lll}
            \Digests = 2^\Mutexes\\
            \initDigests = \{\emptyset\}\\
            \newDigests\,S\,u_1 = \{\emptyset\}\\
            \semDigests{\act}\,S = \{S\} \; \text{(other non-observing)}\\
            \semDigests{\act}\,(S_0,S_1) = \{S_0\} \; \text{(other observing)}
        \end{array}
    \]
    \end{minipage}
    \begin{minipage}[t]{.45\linewidth}\[
        \begin{array}{rll}
            \semDigests{\lock(a)}\,(S_0,S_1) &=& \begin{cases}
                \emptyset & \text{if } a \in S_0\\
                \{S_0 \cup \{a\}\} & \text{otherwise}
                \end{cases}\\[1ex]
            \semDigests{\unlock(a)}\,S &=&
                \begin{cases}
                    \emptyset & \text{if } a \not\in S\\
                    \{S \setminus \{a\}\} & \text{otherwise}
                \end{cases}
        \end{array}
        \]
    \end{minipage}
    \caption{Right-hand sides for lockset digest~\cite{hammerandnail}.}\label{f:locksets}
\end{figure*}

\begin{example}\label{ex:lockset-falls-short}
    Consider the lockset digest~\cite{hammerandnail} (see \cref{f:locksets}) which associates with each program point
    the set of locks held when it is reached.
    The program in \cref{l:prog1} is annotated with this information.
    The digests for lines \ref{line:mg:secondAccessSequenceInMain} and line \ref{line:mg:accessSequenceInT1},
    right before the following access sequences, are given by $\{a\}$.
    $\aSemDigest{\lock(m_g)}(\{a\},\{a\})$ yields $\{\{a,m_g\}\}$, i.e., a non-empty set of digests,
    and thus $\{a\}\;||^{?}_g\;\{a\} = \top$.
    Using the generic definition of $||^{?}_g$ from \cref{prop:generic-mhp}, we cannot exclude that the accesses
    in line \ref{line:mg:secondAccessInMain} and line \ref{line:mg:accessInT1} race, despite both happening while holding
    the mutex $a$.$\hfill\qed$
\end{example}

\begin{remark}
    Setting $\aSemDigest{\lock(m_g)}(S_a,S_b)=\emptyset$ when $S_a \cap S_b \neq \emptyset$ renders the digest inadmissible:
    Consider a program where all accesses to a global $g$ are protected by some mutex $a$.
    Then, any unlock of $m_g$ happens while holding $a$, as does any lock of $m_g$.
    Setting $\aSemDigest{\lock(m_g)}(\{a\},\{a\})=\emptyset$ thus violates \cref{def:ASound}.
\end{remark}

Nevertheless, two accesses where a common program mutex is held cannot race.
In fact, synchronization via mutexes is a common way to prevent races.
Thus, this first definition in terms of bidirectional digest compatibility---while sound and generic---does not sufficiently capture the property for all digests.

\subsection{Bespoke Definitions}\label{s:mhp-predicate}
While the definition in \cref{prop:generic-mhp} allows using information computed by any access stable digest
to exclude races, often better definitions of $||^{?}_g$ are possible.
\begin{example}\label{ex:lockset-mhp}
Consider again the definition of $||^{?}$ for the lockset digest
provided in \cref{ex:lockset-intro}.
\[
    S_a\;||^{?}\;S_b = \begin{cases}
        \textsf{false} & \text{ if } S_a \cap S_b \neq \emptyset\\
        \top & \text{otherwise}
    \end{cases}
\]
This definition is more precise than the definition in terms of bidirectional digest compatibility
and captures the intent behind mutex-based synchronization.
$\hfill\qed$
\end{example}
\begin{proposition}
    The definition of $||^?$ for the lockset digest given above is sound.
\end{proposition}
\begin{proofsketch}
    While immediately believable, we still sketch a proof.
    It is by contraposition: Consider sequences $\bar \chi_a$ and $\bar \chi_b$ of accesses to a global $g$
    and local traces $t_a$, $t_b$ and $t_l$ where at the end of $t_a$ and $t_b$,
    both threads hold some mutex $m$. Then,
    $ m \in (\alpha_{\Digests}\,t_a) \cap (\alpha_{\Digests}\,t_b)$ and thus
    $(\alpha_{\Digests}\,t_a)\;||^{?}\;(\alpha_{\Digests}\,t_b) = \textsf{false}$.
    We show that then the left-hand side of implication \eqref{eq:mhp-sound} is false, i.e.,
    the accesses are not bidirectionally compatible.
    If both accesses are performed by the same thread, these sequences are not bidirectionally compatible as the program
    order orders one sequence before the other.
    Consider the case where the accesses are performed by different threads: Neither $\bar \chi_a$ nor $\bar \chi_b$ unlocks $m$,
    and both $t_a$ and $t_b$ contain operations that lock $m$.
    As locks of $m$ are totally ordered, one of the sequences must have happened before the
    other thread acquired $m$.
    Thus, trace compatibility holds at most in one direction.
    For an in-depth proof, see
    \ifdefined\extended
\cref{app:lockset-mhp-sound}.
\else
    the extended version~\cite[Appendix A]{extended-version}.
\fi
    $\hfill\qed$
\end{proofsketch}
Later, we will usually not flesh out soundness arguments where they are intuitive.

\begin{example}\label{ex:threadingextra}
    For a further example, consider again the digest from \cref{f:threadingmainA}.
    The definition in terms of bidirectional digest compatibility already serves to exclude races
    where one access happens in single-threaded mode whereas the other
    one happens in multi-threaded mode.
    However, two accesses performed by the main thread can also not race as the main thread is unique,
    and such accesses are totally ordered by the program order.
    To exploit this, we can set:
    \[
    M_a\;||^{?}\;M_b = \begin{cases}
        \textsf{false} & \text{ if } ( M_a = \STmain \lor M_b = \STmain ) \lor (M_a = M_b = \MTmain)\\
        \top & \text{otherwise}.
    \end{cases}
\]
\ifdefined\extended
\Cref{s:thread-ids} provides a definition for general thread \emph{id}s.    $\hfill\qed$
\else
    We define $||^?$ for general thread \emph{id}s in the extended version~\cite[Appendix B]{extended-version}. $\hfill\qed$
\fi
\end{example}
\ignore{
The same holds for a thread join analysis where an unlock by a thread definitely not running anymore can be the last unlock,
but the respective sequences are never bidirectionally compatible.
Also, for a thread which has been determined to be unique, while bidirectional digest compatibility is often true,
each action taken by the thread is totally ordered w.r.t.\ all the other actions of this thread, and bidirectional
trace compatibility does not hold. An appropriate definition of the predicate $||^{?}$ can capture this.}
\subsection{Combination of Digests}
\begin{figure}[t]
\begin{minted}[escapeinside=@@,fontsize=\scriptsize,mathescape=false,linenos]{python}
# From program, digest, and RHS of abstract interpretation, construct constr. sys. (Sect. @\ref{s:digests}@)
constraintSystem = constructConSys(program, digest, (@$\llbracket\cdot\rrbracket^{\sharp},\new^\sharp,\init^\sharp,\D$@)); @\label{line:constr:sys}@

# Add helper unknowns and constraints (@$\cref{eq:cons-helper-accesses}$@) and solve using fixpoint engine
hConstraintSystem = addHelperConstraints(constraintSystem); @\label{line:helper:constrs}@
solution = solve(hConstraintSystem); @\label{line:solution}@

# Postprocessing
for g in globals: @\label{line:global:loop}@
    accesses = solution[@${\cal R}_g$@];  # Get set of accesses to g @\label{line:accesses}@

    # For each pair of accesses, check whether they may race and at least one is a write
    for (@$u_0$@, @$\tau_0$@, @$A_0$@) in accesses: @\label{line:outer:access:loop}@
        for (@$u_1$@, @$\tau_1$@, @$A_1$@) in accesses: @\label{line:inner:access:loop}@
            if (@$\tau_0$@ == W or @$\tau_1$@ == W ) and (@$A_0 ||^?_g A_1$@ == @$\top$@): @\label{line:check:race}@
                report_race(@$u_0$@, @$u_1$@); @\label{line:report:race}@
\end{minted}
\caption{Pseudocode of the digest-driven algorithm for data race detection.}
\label{f:algo}
\end{figure} %

\noindent Several admissible digests can be combined into an admissible product digest~\cite{hammerandnail} with all operations
given pointwise. If all digests are access stable, so is the product digest. To leverage such a product digest for
race detection, we consider the range of answers
$\{\textsf{false}, \top\}$ of the predicate $||^{?}_g$ as a lattice, with $\textsf{false}$ as the least element.
The predicate $||^{?}_g$ for a product
$(\Pi_{j=0}^{n-1}\,\Digests_j)$
of $n$ \emph{active digests} then is defined as the component-wise
meet of predicates $||^{?}_{g,j}$ for individual digests
\[
    (A^0_0,\dots,A^0_{n-1})\;||^?_g\;(A^1_0,\dots,A^1_{n-1}) = \textstyle{\bigsqcap_{j=0}^{n-1}} (A^0_j\;||^?_{g,j} \;A^1_j)
\]
As all $||^{?}_{g,j}$ soundly overapproximate bidirectional
compatibility, so does the meet.
\begin{example}
    Consider again the program in \cref{l:prog1} where the main thread first initializes a global $g$
    and then starts two threads which
    each write to $g$, with only the latter accesses synchronized via some mutex.
    Neither the lockset digest (\cref{ex:lockset-mhp})
    nor the lightweight thread \emph{id} digest (\cref{{ex:threadingextra}})
    on their own suffice to show the absence of all data
    races, but their combination does.
$\hfill\qed$
\end{example}
Alternatively, one can define a custom predicate for the product digest to jointly exploit information
from all digests---at the expense of providing a dedicated soundness argument.
Such a custom predicate resembles the \emph{reduced} product construction as proposed by \citet{CousotC79},
while the definition via the component-wise meet is more in the spirit of the \emph{direct} product.


\section{The Digest-Driven Race Detection Algorithm}\label{s:algorithm}
The predicate $||^{?}_g$ as a convenient yet expressive overapproximation of bidirectional trace compatibility can
serve as a cornerstone of a race detection algorithm for a static analyzer.
Such an algorithm works in two phases: During the analysis,
for each global $g$ the digest associated with each access and the access type is recorded.
In a post-processing phase, the predicate $||^{?}_g$ is then used to check whether there are any accesses that may race.

The algorithm is shown in \cref{f:algo}.
In line \ref{line:constr:sys} a refined abstract constraint system is constructed that is parameterized by the program,
the abstract domain and the digest, which we require to be admissible and access stable.
%
%
%
This constraint system is extended with constraints
that enforce that for each global variable information about each access to it is accumulated (line \ref{line:helper:constrs}).
For each global $g$, a dedicated unknown $[\raceUnknown{g}]$ is introduced, with values ranging over
$2^{(\mathcal{N} \times \{\text{W},\text{R}\} \times (\Pi_i \Digests_i))}$
where each tuple corresponds to an access. It consists of the location of access, its type ($\text{W}$(rite) or $\text{R}$(ead)),
and digest information from all active digests.
%
This is achieved by introducing for each constraint of the form
\[
    \begin{array}{l}
        \relax[ \unlock(m_g), A'] \sqsupseteq \aSem{(u_1,\unlock(m_g),u_2)} [u_1,A_0]
    \end{array}
\]
an additional constraint
\begin{equation}
    \begin{array}{l}
        \relax[\raceUnknown{g}] \sqsupseteq \left(\lambda\;x \to
            \begin{cases}
                \{ (u_1,\tau\,u_1,A')\}  & \textsf{if } x \neq \bot\\
                \emptyset & \textsf{else}
            \end{cases} \right) [u_1,A_0]
    \end{array}\label{eq:cons-helper-accesses}
\end{equation}
where $\tau\,u_1 \in \{W,R\}$ is the type of the access which can be determined by checking whether the single
(by construction of the programs with atomicity mutexes) incoming edge of $u_1$ is a write or a read.
The case distinction ensures that only the digests associated with potentially reachable accesses are recorded,
while the digest $A'$ in this case is guaranteed to be identical to the digest before executing the access sequence
due to access stability (\cref{def:Astable}).
The constraint system is then solved (line \ref{line:solution}).
In the post-processing phase (line \ref{line:global:loop} to \ref{line:report:race}) for each global the set of accumulated accesses is considered.
For any two program points $u_0$ and $u_1$, a potential race for a global $g$ is then flagged if
\[
    \begin{array}{l}
    \exists \tau_0,A_0,\tau_1,A_1{:} (u_0,\tau_0,A_0), (u_1,\tau_1,A_1) {\in} [\raceUnknown{g}]\land
    (\tau_0 {=} W {\lor} \tau_1 {=} W) \land (A_0\;||^?_g\;A_1) {=} \top,
    \end{array}
\]
i.e., at least one of the accesses is a write, and it cannot be excluded that they are unordered w.r.t.\ each other.
We remark that the tuples $(u_0,\tau_0,A_0)$ and $(u_1,\tau_1,A_1)$ are allowed to coincide,
as two accesses with the same digest may still race as they could be performed by different (concrete) threads.

\begin{remark}
Where an analyzer does not have concrete memory locations, but only abstract memory locations,
as may be the case
when analyzing programs with dynamic memory allocation with an allocation site abstraction, the
check is performed for all accesses to abstract memory locations that potentially overlap.
\end{remark}


\section{Experiments and Benchmark Set}\label{s:experiments}

The \Goblint static analyzer~\cite{VojdaniARSVV16} is based on thread-modular abstract interpretation of the local trace
semantics~\cite{schwarz2023clustered,schwarzthesis}.
While it has supported data race detection for a long time, digest-driven data race detection proposed here serves as the
theoretical underpinning and justification for this feature. \Goblint implements various
digests, and the digest-driven algorithm we propose in \cref{s:algorithm} is what emerged after unifying various older ad-hoc mechanisms.

%
\Goblint participates in the data race track of 
\SVCOMP since its inception 
in 2021, receiving the highest score in 2021, 2023, and 2024\footnote{\textsc{Deagle} initially scored higher but was
disqualified for fingerprinting tasks~\cite{Beyer24}.} and the second-highest score in
2025---beaten only by \textsc{CoOperRace} which internally calls \Goblint.~\cite{Beyer21,Beyer23,Beyer24,BeyerS25}
\cref{table:svcomp-25-no-data-race-main} shows the 2025 results.
Among these, \Goblint\ is the only tool producing no incorrect verdicts.
\begin{table}[t]
	\centering
	\footnotesize
\begin{tabular}{l *{5}{r}}
\toprule
& \textsc{CoOpeRace} & \textsc{Goblint} & \textsc{Infer} & \textsc{Deagle} & \textsc{RacerF} \\
\midrule
Correct true      & 754 & 712 & 747 & 685 & 674 \\
Correct false     &   6 &   0 &   0 &   0 &  68 \\
Incorrect true    &   1 &   0 & 213 &   1 &   0 \\
Incorrect false   &   0 &   0 &   0 &   0 &   4 \\
\bottomrule
\end{tabular}
	\vspace{1em}
	\caption{SV-COMP 2025 results for no-data-race~\cite{svcomp25NoDataRaceMain} of tools with at least 650 correct true verdicts.
	\emph{Correct true} (\emph{correct false}) is the number of programs where a tool correctly deduced race-freedom (raciness).
	\emph{Incorrect true} (\emph{incorrect false}) is the number of programs where a tool incorrectly claimed race-freedom (raciness).\label{table:svcomp-25-no-data-race-main}
	}
\end{table}
%

%
%
We pose the following research question about digest-driven data race detection as implemented in \Goblint:
\begin{description}
    \item[RQ] What impact do different digests and combinations of digests have on the number of solved tasks in the \SVCOMP\ data race benchmark set?
\end{description}
We modify
\Goblint\ to allow us to selectively change the predicate $||^?_g$ for some digests to always return $\top$.
Then, we compare for how many tasks \Goblint\ still succeeds in proving data-race freedom depending on which
digests are used to exclude races.
\Goblint\ implements the following digests:\footnote{
Digests marked with a $^\star$ go beyond the framework presented here in that they
combine  time and space aspects.
We include them as they nevertheless give rise to a predicate $||^{?}$ akin to ours and can thus be employed in the digest-based algorithm.
}
\begin{description}[labelwidth=0.8cm,leftmargin=1.2cm]
    \item[L] Locksets (c.f. \cref{f:locksets} and \cite{hammerandnail}, plus support for Reader/Writer-Locks)
    \item[TF] Threadflag for multi- vs. single-threaded mode (c.f. \cref{f:threadingmainA})
    \item[TID] History-based thread \emph{id}s~\cite{schwarz2023clustered} (recording parts of the creation history)
    \item[J] Must-joined threads~\cite{schwarz2023clustered} (based on the history-based thread \emph{id}s)
    \item[FR] Tracking allocated memory that has not escaped its thread, i.e., is \emph{fresh}\footnote{
        It may seem surprising that FR can be considered a time-separation property.
        However, any access where the accessed memory has not escaped its thread yet
        must be ordered w.r.t. all other accesses to the same memory: Either both accesses happen in the same thread, or the other access
        happens later after the memory has escaped.
    }
    \item[SL] Symbolic per-element locksets~\cite[Sections 5\&6]{VojdaniARSVV16}$^\star$
    \item[R] Region analysis~\cite{seidl2009region}$^\star$
\end{description}
The benchmarks were conducted on a machine with two \textsf{Intel Xeon Platinum 8260 @ 2.40GHz}
processors with 24 cores each and 256GB of RAM running \textsf{Ubuntu 24.04.1 LTS}.
\textsc{Benchexec}~\cite{BeyerLW19} was used to limit each run to 5 minutes and 5GB of memory.
With these settings, the system ran out of memory for 4 tasks, 3 tasks timed out, and the
analyzer crashed for 7 tasks. Where execution terminated, it did so within at most 20s, where for
all but 6 tasks the runtime was below 10s. As the analysis was always performed with the same settings,
and only the definition of the predicate $||^?_g$ was changed, we do not further detail runtimes.\footnote{
For \textbf{TID}, prior experiments~\cite[Chapter 5.3.2]{schwarzthesis} on large, real-world programs,
have established a median slowdown of around $1.2\times$ (with a maximum slowdown of $5.2\times$) in the context of value analyses.}
\begin{figure}[t]
    \newcommand{\on}{$\bullet$}
    \newcommand{\vbar}[1]{%
        \edef\computedheight{\fpeval{#1/794*3}}
        \begin{tikzpicture}[baseline=0.5ex]
            \fill[black] (0,0) rectangle (0.3,\computedheight);
        \end{tikzpicture}%
    }

    \newcommand{\axis}{%

        \begin{tikzpicture}[baseline=0.5ex]
            \draw (0,0) -- (0,3) node {};
            \foreach \y/\label in {0/0, 1.13/300, 2.27/600, 3/794} {
                \draw (0,\y) -- (-0.1,\y) node[left, xshift=-2pt] {\tiny \label};
            }
        \end{tikzpicture}%
    }

    \centering
    \newcolumntype{C}{>{\centering\arraybackslash}p{0.5cm}}
    \newcolumntype{G}{>{\columncolor{gray!20}\centering\arraybackslash}p{0.5cm}}

    \begin{tabular}{|l|GCGCGCGCGCGCGC|}
    \toprule
    \textbf{L}   &    & \on &     &       &       &       &         &     & \on  &  \on  & \on  & \on &  \on   &  \on  \\
    \textbf{TF}  &    &     & \on &       &       &       &         &     & \on  &       &      & \on &  \on   &  \on  \\
    \textbf{TID} &    &     &     &  \on  &       &       &         & \on &      &  \on  & \on  & \on &  \on   &  \on  \\
    \textbf{J}   &    &     &     &       &       &       &         & \on &      &       & \on  & \on &  \on   &  \on  \\
    \textbf{FR}  &    &     &     &       &  \on  &       &         &     &      &       &      & \on &  \on   &  \on  \\
    \textbf{SL}  &    &     &     &       &       &  \on  &         &     &      &       &      &     &  \on   &  \on  \\
    \textbf{R}   &    &     &     &       &       &       &  \on    &     &      &       &      &     &        &  \on  \\
    \midrule
    \textbf{\#}  & 15 & 128 & 52 &  91  &  15  &  29  &  16  & 166 & 492 & 535 & 681   & 681   &  693    &  712  \\
    \bottomrule

    \axis & \vbar{15} & \vbar{128} & \vbar{52} & \vbar{91} & \vbar{15} & \vbar{29} & \vbar{16} & \vbar{166} & \vbar{492}  & \vbar{535} & \vbar{681} & \vbar{681} & \vbar{693} & \vbar{712}  \\
    \bottomrule
    \end{tabular}

    \caption{The number of tasks where each digest succeeded in proving race freedom, out of a total of
    794 race-free tasks in the benchmark set. None of the approaches produced any false negatives for the other
    235 tasks in the benchmark set, which are racy.
    }
    \label{f:results}
\end{figure}
\Cref{f:results} summarizes the evaluation results: Each column corresponds to one configuration,
and a $\bullet$ in a row indicates that this configuration uses the given digest to exclude races.
The number indicates for how many out of a total of 794 race-free tasks the configuration succeeded in establishing
race freedom.

Interestingly, for $15$ of the $794$ tasks, none of the digests
is necessary to prove race freedom. These programs only contain accesses to shared memory locations
marked as atomic using the \texttt{\_Atomic} specifier from C11 or its \textsc{GCC} equivalent.
When it comes to single digests, \textbf{L} succeeds in proving race freedom for a sizable chunk of tasks, as do
\textbf{TF} and \textbf{TID}. Here, \textbf{TID} subsumes \textbf{TF}.
The other digests are less successful on their own, succeeding for less than $30$ tasks each.
While enabling \textbf{J} on top of \textbf{TID} almost doubles the number of proven tasks, it is only the combination
of thread-based and lock-based techniques that enables race-freedom to be proven for the majority of tasks
(\textbf{L+TF}: 492, \textbf{L+TID}: 535, \textbf{L+TID+J}: 681). Enabling \textbf{FR} and \textbf{TF}
on top of \textbf{L+TID+J} does not yield an improvement, but enabling \textbf{SL} and \textbf{R} modestly increases the
number of successful tasks to $712$.
Overall, the experiments show that digest-based data race detection and the \emph{combination} of digests reasoning about mutexes, created threads, and joins underlie the success of \Goblint.
The impact of the more specialized techniques used by \Goblint on the other hand is rather limited.

\paragraph{Threats to Validity.}
All digests remained enabled, with only $||^?_g$ modified to always return $\top$.
This may lead to underestimating their impact, as a digest may help show program points unreachable.
However, we do not expect programs to contain many unreachable accesses.
Additionally, digests can constrain other digests with which program points are reached.
As our digests do not exploit guards, we expect such improvements
(through the value analysis) to be rare. 

\section{Related Work}\label{s:related}
Detecting data races is a well-studied program analysis problem and
there is a wealth of dynamic
\cite{OCallahanC03,RamanZSVY12,SavageBNSA97,Effinger-DeanLCGB12,MarinoMN09,SerebryanyI09}
and static race detectors \cite{PratikakisFH06, Terauchi08, VojdaniARSVV16, HeSF22, GavrilenkoLFHM19, FarzanKP22,DietschHKSP23,EnglerA03,LiuLLTS021, VoungJL07}.
%
A recent comparison of sound static race detectors~\cite{Holter25} concludes 
that, to this day, open challenges remain---both w.r.t.\ showing time- and space-separation of accesses.
Heeding this call for further research, our work provides a solid foundation for reasoning about time separation of accesses.

For the unsound static race detector \textsc{RacerD}, \citet{GorogiannisOS19} state under which (idealized) conditions the tool becomes sound.
Their definition of a race is for an interleaving semantics, and they require that there is an interleaving
that can be extended by performing either of the racy accesses next.
This differs from our setting building on the local trace semantics, where,
in general, no local view of the execution can be extended directly by either racy access, rendering their definition inapplicable for local traces.
\citet{ChopraPD19}, for an analysis of interrupt-driven kernels, define data races based on overlapping sections of executions
which also does not apply to the local trace semantics.

A line of work closely tied to the time-separation aspect of data race detection 
is \emph{May-Happen-in-Parallel} (MHP) analysis~\cite{Naumovich1999,Barik2006,Agarwal07,Di2015,Zhou2018,Albert12,Albert15,Albert17,Lu2025}, which
attempts to determine sets of statements that may execute in parallel.
Approaches there are, once more, often based on flavors of interleaving semantics, rather than a local view.
Typically, control flow within threads,
i.e., whether given statements are reachable at all, is abstracted away.
Approaches may, e.g., extract a graph-based overapproximation of the \emph{inter}-thread control flow in a first step,
with an analysis of this graph as a second step.~\cite{Di2015,Lu2025,Barik2006}
Our framework, on the other hand, is integrated into a thread-modular abstract interpreter and thus
seamlessly combines MHP reasoning for races with value analyses.
Furthermore, as opposed to the more monolithic approaches in the MHP literature, our framework can be instantiated with different digests
and their admissibility and the soundness of their predicate $||^?_g$ can be established modularly for each digest.

Digests were proposed as a way to refine the static analysis of the \emph{values}
of global variables of multi-threaded programs~\cite{schwarz2023clustered,schwarzthesis} with later
work~\cite{hammerandnail} proposing further instances of digests and generalizing the setup along various axes---within the scope of value analysis.
We do not study value analyses, but instead show how digests can be repurposed
to obtain principled static data race checkers in an extensible framework allowing for modular soundness proofs.

%


\section{Conclusion and Future Work}\label{s:conclusion}
We have provided a framework for reasoning about which accesses cannot happen simultaneously for digest-driven abstract interpretation.
Giving two definitions of data races for the local trace semantics and establishing their equivalence
allowed us to derive a sound digest-based data race detection algorithm.
It relies on a novel predicate $||^?_g$ to check whether two accesses with given digests may happen in parallel.
We have shown how a generic definition of this predicate 
can be derived in terms of building blocks provided by any digest, and how, for some digests,
bespoke definitions can yield more precise results.
Digest-Driven Abstract Interpretation underlies our implementation in the static analyzer \Goblint.
To evaluate the impact of digests on the results of \Goblint, we have turned specific digests on or off, finding that it is the combination of digests that underlies its success in \SVCOMP.
Combining the lockset digest with digests reasoning on thread \emph{id}s and joins increases the number of successful tasks
by more than fourfold when compared to the thread-related digests and by more than fivefold when compared to the lockset digest alone.

There are many directions for future work: Preliminary experiments show that digests can capture further synchronization
primitives such as barriers, and that our techniques also apply to the analysis of other concurrency issues such as deadlocks.
Going beyond time separation, it is also worth exploring how to generalize digests to also tackle space separation,
by, e.g., enhancing pointers with allocation time stamps.
Lastly, it would be interesting to investigate how digests can serve as succinct explanations of analysis results to end users
that do not require them to understand the inner workings of the analyzer and are thus easily digestible.

\ignore{For instance listing the sets of held locks at racy accesses may provide strong hints at lapses in the locking discipline,
while the presence of a data race between two accesses where one of the threads should already be joined,
may be an easy-to-understand case of analyzer imprecision. Both should be easily digestible explanations for users of static analyzers.}

\ignore{A first step in this direction could, e.g., be enhancing pointers with allocation time stamps.
In this way, even when dynamically allocated memory is not tracked precisely, it would be known that pointers to the
same abstract chunk of heap, but with different allocation timestamps, do, in fact, represent disjunct regions of the heap.
Also, it would be interesting to see how digests can help detect other concurrency issues, such as deadlocks.
\todo[inline]{Integrate?}
The computed digests do not only underlie the reasoning performed by static analyzers but also serve as an easy-to-understand
insight into the reasoning performed by the analyzer, as it is local to the potential accesses, and often expressible
in a way that does not
require deep insights into the inner workings of the analyzer: For instance listing the sets of held locks at
both accesses may provide strong hints at lapses in the locking discipline, while the presence of a data race
between two accesses where one of the threads should already be joined, may be an easy-to-understand case of analyzer
imprecision.}

\paragraph*{Acknowledgments and Data Availability Statement.}
\ignore{This research is supported in part by the National Research Foundation, Singapore,
and Cyber Security Agency of Singapore under its National Cybersecurity
R\&D Programme (Fuzz Testing NRF-NCR25-Fuzz-0001).
Any opinions, findings and conclusions, or recommendations expressed
in this material are those of the author(s) and do not reflect the views
of National Research Foundation, Singapore, and Cyber Security Agency of Singapore.}
This work is also supported in part by
Deutsche Forschungsgemeinschaft (DFG) --- 378803395/2428
\textsc{ConVeY}.
An artifact~\cite{artifact} allowing for the reproduction of our results is available on Zenodo.

{
  \renewcommand{\doi}[1]{\textsc{doi}: \href{http://dx.doi.org/#1}{\nolinkurl{#1}}}
  \bibliographystyle{splncs04nat}
  \bibliography{bibliography}
}
\ifdefined\extended
\appendix
\begin{section}{Soundness of Predicate $||^?$ for Lockset Digest}\label{s:lockset-mhp-sound}
\label{app:lockset-mhp-sound}
We first recall the proposition from \cref{s:mhp-predicate}:
\begin{proposition}
    The definition of $||^?$ for the lockset digest given above is sound.
\end{proposition}

\begin{proof}
    The proof is by contraposition: Consider two sequences $\bar \chi_a$ and $\bar \chi_b$ of accesses to a global, say $g$,
    and local traces $t_a$, $t_b$ and $t_l$ where at the end of $t_a$ and $t_b$,
    both threads hold some mutex, say $m$. Then, we have
    $ m \in (\alpha_{\Digests}\,t_a) \cap (\alpha_{\Digests}\,t_b)$ and thus
    $(\alpha_{\Digests}\,t_a)\;||^{?}\;(\alpha_{\Digests}\,t_b) = \textsf{false}$.
    We show that then the left-hand side of the implication in \eqref{eq:mhp-sound} does not hold either, i.e.,
    the accesses are not in fact bidirectionally compatible.

    For the left-hand side of the implication to not hold, either $\semT{\bar \chi_b}(t_b,\semT{\bar \chi_a}(t_a,t_l))$
    or $\semT{\bar a\chi_a}(t_a,\semT{\bar \chi_b}(t_b,t_l))$ needs to return $\emptyset$.
    If both return $\emptyset$, nothing is to be shown. Now consider the case where one does not return $\emptyset$.
    We show that in this case, the other one must be $\emptyset$.

    W.l.o.g., assume that $\emptyset \neq \semT{\bar \chi_b}(t_b,\semT{\bar \chi_a}(t_a,t_l))$ holds, i.e., the access
    $\bar a_a$ can happen before the access $\bar a_b$.
    Let $\{ t' \} = \semT{\bar \chi_a}(t_a,t_l)$.
    $t'$ then, by construction, contains an action $\unlock(m_g)$ that happens later than the $\unlock(m_g)$ in $t_l$.
    As $m$ is held at the end of $t_a$ and executing the sequence $\bar a_a$
    consisting of locking $m_g$, accessing $g$, and unlocking $m_g$ does not unlock $m$, at the end of $t'$ the ego thread
    still holds $m$.
    We distinguish two cases:
    \begin{itemize}
        \item Both $\bar \chi_a$ and $\bar \chi_b$ are executed by the same thread, with $m$ continuously held.
        \item The thread executing $\bar \chi_a$ has unlocked $m$ after this access,
            and this unlock precedes the last lock of $m$ by the ego thread in $t_b$.
    \end{itemize}
    In either case, $t'$ is a subtrace of $t_b$.
    Now consider executing the accesses in the reverse order.
    Then, $\semT{\bar \chi_b}(t_b,t_l) = \emptyset$, as $t_b$ contains $t'$
    and the $\unlock(m_g)$ in which $t_l$ ends is not the last unlock of $m_g$.
    Therefore, also $\semT{\bar \chi_a}(t_a,\semT{\bar \chi_b}(t_b,t_l)) = \emptyset$, and the left-hand side of the implication is false.
    %
    $\hfill\qed$
\end{proof}
\end{section}

\section{Predicate $||^?$ for Thread-ID Digests}\label{s:thread-ids}
A central instance of the digest framework are thread \emph{id}s~\cite{hammerandnail}.
A digest $\Digests$ that assigns to each thread an \emph{id} is called a \emph{thread id} digest.
As the digest may contain information other than the thread \emph{id} of the current thread,
we require that there is a function $\tid: \Digests \to \TIDs$ that, given a digest, yields its thread \emph{id}.
We demand that the thread \emph{id} of a given thread must remain constant during its execution.
Again, as in previous work~\cite{hammerandnail}, we assume the existence of two helper functions \mayrun\ and \unique:
\begin{itemize}
	\item $\unique : \TIDs \to \Bool$, yields, given a thread \emph{id}, a boolean indicating whether the thread represented by the \emph{id} is unique.
	\item $\mayrun : \Digests \to \Digests \to \Bool$, where $\mayrun\,a\,b$ must be true, in case the thread with digest $b$ may have already started
	when the ego thread has the digest $a$.
\end{itemize}

\begin{example}\label{ex:thread-ids}
	Consider the lightweight thread \emph{id} digest in \cref{f:threadingmainA}.
	We define the set of thread \emph{id}s $\TIDs$ as $\{\main, \othertid\}$ to allow differentiating between the $\main$ thread and other threads.
	The function $\tid$ is given by:
	\[
		\tid\;d = \begin{cases}
			\maintid & \text{if } d = \STmain \lor d = \MTmain \\
			\othertid & \text{otherwise}
		\end{cases}.
	\]
	It extracts the thread \emph{id} $\maintid$ out of the digests $\STmain$ and $\MTmain$.
	The function $\unique$ indicates that $\maintid$ is unique:
	\[
		\unique\;i = \begin{cases}
			\true & \text{if } i = \maintid \\
			\false & \text{otherwise}
		\end{cases}.
	\]
	The function $\mayrun$ is defined as follows:
	\[
		\mayrun\;a\;b = \begin{cases}
			\false & \text{if } a = \STmain \land b = \MT \\
			\true & \text{otherwise}
		\end{cases}.
	\]
	In case the ego thread has the digest $\STmain$, it can be excluded that a thread with digest $\MT$ has already been created.
	$\hfill\qed$
\end{example}
We now define the predicate $||^?$ for $\Digests$ generically in terms of these helper functions:
\begin{equation}\label{eq:general-thread-id-pred}
	a\;||^?\;b = \begin{cases}
		\false & \text{if } (\tid\;a = \tid\;b \land \unique\;(\tid\;a))\\
		& \qquad \lor \neg \mayrun\;a\;b\; \lor \neg \mayrun\;b\;a\\
		\top & \text{otherwise}
	\end{cases}
\end{equation}
This definition excludes races in case the digests $a$ and $b$ refer to the same thread \emph{id} and that the thread \emph{id} of $a$ is unique.
This is sound as a single thread cannot race with itself.
Additionally, it excludes races when either the thread with digest $b$ may not be started when the ego thread has the digest $a$, or vice versa.
\begin{example}
	Consider again the lightweight thread \emph{id} digest with the definitions in \cref{ex:thread-ids}.
	For this example, the definition in \cref{eq:general-thread-id-pred} coincides with the definition already given in \cref{ex:threadingextra}.
	$\hfill\qed$
\end{example}
The definition in \cref{eq:general-thread-id-pred} is also applicable to more intricate thread \emph{id} digests, such as
the ones proposed in earlier work~\cite[Section 6]{schwarz2023clustered}, where unique identities are also maintained for threads other than $\maintid$.

\fi

\end{document}